\providecommand{\doiurl}[1]{\url{#1}}
\numberwithin{equation}{section}
\theoremstyle{thmstyleone}
\newtheorem{theorem}{Theorem}
\theoremstyle{thmstyletwo}
\newtheorem{remark}{Remark}
\theoremstyle{thmstylethree}
\newtheorem{lemma}{Lemma}
\title[Article Title]{Integrating Household Dynamics in Stochastic Epidemic Modeling: An SDE Approach to the SIR Framework}
\author[1]{\fnm{Houda} \sur{Yaqine}}\email{houda.yaqine@uni-bielefeld.de}
\author*[1,2]{\fnm{Christiane} \sur{Fuchs}}\email{christiane.fuchs@uni-bielefeld.de}
\affil*[1]{\orgdiv{Faculty of Business Administration and Economics}, \orgname{Bielefeld University},  \state{Bielefeld}, \country{Germany}}
\affil[2]{\orgdiv{Computational Health Center}, \orgname{Institute of Computational Biology, Helmholtz Zentrum München},  \city{Neuherberg}, \country{Germany}}
\begin{document}

\abstract{Understanding infectious disease spread remains a critical public health challenge, particularly given the interplay between household dynamics and community transmission patterns. Traditional epidemiological models often oversimplify these dynamics by treating populations as homogeneous, failing to capture crucial household-level interactions that can significantly impact disease spread. 
This paper introduces a new stochastic differential equation model extending the SIR framework by capturing the randomness in disease spread and incorporating household structure and heterogeneous mixing patterns. The model divides the population into groups based on age and household size, includes subpopulation-targeted lockdown parameters and constructs detailed contact matrices accounting for both public and within-household interactions.

Through the approximation of Markov jump processes by branching processes near the disease free equilibrium, we derive the basic reproduction number of our model and conduct global sensitivity analysis using Sobol indices to identify influential factors. Our simulations reveal that incorporating household structure leads to substantially different predictions compared to traditional models, particularly in epidemic timing and peak intensity. The stochastic framework captures important variations in outbreak trajectories overlooked by deterministic approaches, especially during early and peak phases. This work contributes to both mathematical epidemiology and practical public health planning by providing a sophisticated mathematical understanding of how population structure and randomness influence disease dynamics, offering insights for intervention strategies where household transmission plays a significant role.}

\keywords{Stochastic Differential Equations, SIR Model, Household Structure, Epidemic Modeling, Global Sensitivity Analysis,  Threshold Parameter }

\maketitle

\section{Introduction}\label{sec1}
Infectious diseases pose a significant and persistent challenge to global public health, leading to considerable mortality and disability worldwide \citep{naghavi2024global, ledesma2024global}. Understanding the dynamics of disease transmission is crucial to developing effective control measures and predicting the impact of intervention policies.  To address this complex challenge, researchers have turned to mathematical modeling \citep{grassly2008mathematical} as a powerful analytical framework that can capture and predict disease transmission patterns. While various approaches exist, including network-based models and agent-based simulations, compartmental models have emerged as particularly valuable tools due to their balance of simplicity, interpretability, and predictive power. Among these compartmental approaches, one of the most widely used models to study the spread of infectious diseases is the SIR model \citep{kermack1927contribution}. This model divides the population into three compartments: Susceptible (those who can contract the disease), Infected (those who have the disease and can spread it), and Recovered (those who have recovered and are immune). Originally designed for populations with uniform mixing patterns, the SIR model has evolved to include variations such as SIS (Susceptible-Infected-Susceptible), SEIR (Susceptible-Exposed-Infected-Recovered), age-structured SIR, and spatial SIR models to account for more complex disease transmission dynamics.

Beyond these structural variations, the mathematical implementation of the SIR model can take different forms, each with distinct advantages for specific modeling scenarios. These include deterministic formulations such as ordinary differential equations (ODEs) and stochastic formulations like stochastic differential equations (SDEs) \citep{Allen2010,Brauer2019}. SDEs are particularly useful as they account for random variations in disease spread due to factors such as individual behavior, fluctuations in contact rates, and environmental influences. This stochastic approach is especially relevant for diseases like SARS-CoV-2, where transmission is highly stochastic and driven by super-spreading events \citep{lau2020characterizing} .

SDE models overcome several limitations of deterministic ODE approaches in epidemiology. For identical initial conditions, ODEs predict identical outcomes, missing the inherent randomness seen in actual disease outbreaks \citep{allen2008introduction}. This becomes problematic with small infection numbers, where chance events significantly shape outbreak patterns \citep{allen2017primer}. Studies found that ODEs overestimate disease persistence during local extinctions, while SDEs capture both persistence and extinction scenarios \citep{Roberts2014} more appropriately. Additionally, unlike ODEs with their state-continuous population changes, real disease spread occurs in discrete jumps affected by demographic and environmental factors \citep{Raczynski1996}. SDEs can better approximate these discrete jumps through random fluctuations while maintaining mathematical tractability, offering a middle ground between simpler ODE models and more complex agent-based methods \citep{Niemann2021}. 

Traditional compartmental models, regardless of whether they use ODE or SDE formulations, assume homogeneous mixing of populations within each compartment. In the standard three-compartment SIR model, individuals from the entire population meet randomly without considering the influence of repeated encounters. In reality, people who live in the same household are in contact with each other more frequently than with those outside their household, resulting in higher transmission rates within households. The importance of household structure in disease transmission has been well-documented: studies by \cite{hilton2019incorporating} and \cite{liu2021modelling} emphasize the role of household structure in disease dynamics, whereas \cite{Chisholm2020} and \cite{goeyvaerts2018household} underscore the need for models that account for complex household structures and non-random contacts. \cite{Meszaros2020} and \cite{Black2022} provide specific examples of how household transmission shapes epidemic outcomes. Although stochastic household epidemic models have been developed using continuous-time Markov chains, and deterministic frameworks exist for household-structured models with two-level mixing \citep{Bayham2016}, a comprehensive stochastic framework using SDEs that simultaneously tracks both individual health states and household-level dynamics remains underdeveloped. Such a dual-level SDE approach would capture the interplay between within-household and between-household transmission while accounting for demographic stochasticity. It would provide a tractable middle ground between computationally intensive Markov jump process simulations and deterministic approximations. 

To address this gap, we develop a dual system of coupled SDEs for the SIR model, one tracking individual health states and another tracking household health states. Specifically, we extend the deterministic framework of \cite{Bayham2016} in two key ways: (1) by systematically deriving SDEs from the approximation of Markov jump processes (MJP) by diffusion processes \citep{Fuchs2013}, providing a tractable way to represent stochastic fluctuations in population interactions; and (2) by constructing explicit contact matrices that capture heterogeneous mixing patterns based on age and household size. While Bayham's model uses aggregate transmission rates, our contact matrices reflect social behavior and mixing preferences specific to different subpopulations. This dual-level stochastic framework with structured contact patterns contributes a new perspective to the field of epidemic modeling.

This paper is structured as follows: Section~\ref{sec: model} presents our model, detailing its construction process and the development of the contact matrices. Section~\ref{sec:branching_process_approx} focuses on the derivation of the basic reproduction number to gain insight about the model's threshold behavior. In Section~\ref{sec:simulations}, we conduct several numerical analyses: we simulate the model to investigate the impact of public policy on exposure intensity, perform a global sensitivity analysis of the threshold parameter, and compare our model's deterministic version with other multitype SIR models to highlight key structural differences. Section~\ref{sec:conclusion} concludes with a discussion of our findings. We address the implications of our work and identify key points for future research that could further enhance this contribution to the field of epidemic modeling.

\section{Model Formulation}\label{sec: model}

Our model carefully considers how disease states evolve at both individual and household levels. At the population level, we model the numbers of individuals in susceptible~($\bm S$), infected ($\bm I$), and recovered ($\bm R$) states. Similarly, for households, we model their health status using states $\bm H_S$ (susceptible household), $\bm H_I$ (infected household), and~$\bm H_R$ (recovered household).  
Since a household comprises multiple individuals, we characterize a household's susceptibility (i.\,e., being in state~$\bm H_S$) as the complete susceptibility of all of its members; a household's infection event (i.\,e., moving to state~$\bm H_I$) as the initial infection occurring within an entirely susceptible household. 
For simplicity, we assume that infected individuals inevitably infect all other members of their households. A household's recovery event (i.,e., moving to state~$\bm H_R$) takes place with the final recovery of all household members. This simplification keeps the model manageable by avoiding pathways from $\bm H_I$ back to a state of non-infected but partially susceptible households.

The transitions between these states, whether for individuals or households,  follow Markov jump processes, which we later approximate by diffusion processes described by SDEs \citep{Fuchs2013}. The resulting drift terms coincide with the deterministic household-structured model of \cite{Bayham2016}, whose notation we adopt for consistency. However, unlike adding noise to existing ODEs, our approach derives both drift and diffusion components from the transition structure of the MJP, ensuring that stochastic fluctuations are properly scaled to the process dynamics.

\begin{figure}[H]
    \centering
    \includegraphics[width=0.9\linewidth]{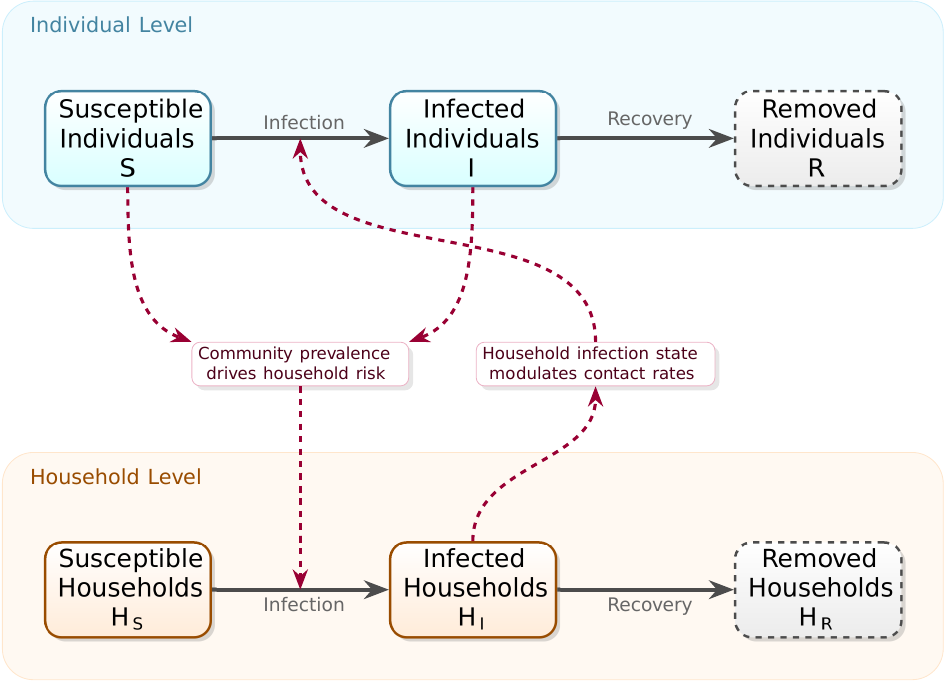}
    \caption{Model structure with two types of connections. Solid arrows represent the physical state transitions (infection and recovery). Dashed lines depict the interaction between scales: individual prevalence drives household risk, which in turn reshapes individual transmission dynamics.}
    \label{fig:placeholder}
\end{figure}

The key challenge in our model lies in the distinct nature of individual and household transitions. While both levels follow Markov jump processes, combining them into a single process would disproportionately increase the complexity of the model. 
We initially separate them because household state changes $(\bm H_S \xrightarrow{} \bm H_I \xrightarrow{} \bm H_R)$, as defined by the collective status of all members,  do not align directly with individual transitions $(\bm S \xrightarrow{} \bm I \xrightarrow{}\bm R)$. For instance, when a susceptible individual ($\bm S$) becomes infected ($\bm I$), their household will simply remain in the same state $\bm H_I$ if it already contains an infected person. As a result, we construct separate systems of SDEs for individuals $(\bm S, \bm I, \bm R)$ and households $(\bm H_S, \bm H_I, \bm H_R)$. These systems, while distinct, remain coupled to capture how infection dynamics at each level influence the other.

\subsection{Assumptions and Notation}

Infection events of individuals require infectious contacts, and recovery events presuppose a healing process in the individual's body. The probabilities for such events are modelled to be proportional to contact and recovery rates, and we assume that these rates depend on the individuals' age and the size of the household they are living in.
We thus divide the population into~$K=A\times L$ subpopulations defined by combinations of $A$~age categories and $L$~household sizes, with combination~$k=(l-1)A+a$ representing the $a$-th age category and $l$-th household size. 

Contact rates between individuals differ depending on whether contacts occur in public locations or within households.  Let $\bm S$, $\bm I$, and $\bm R$ be $K$-dimensional column vectors representing  numbers of susceptible, infected, and recovered individuals in each subpopulation, respectively. Similarly, let $\bm H_S$, $\bm H_I$, and $\bm H_R$ be $L$-dimensional column vectors representing the numbers of susceptible, infected, and recovered households in each size. The population is considered closed, with no births, deaths, or migration during the modeled time window. Thus, sizes of recovered subpopulations can be deduced from the other states, and the vectors $\bm{X} = (S_1,I_1,\dots, S_K, I_K)^T$ and $\bm{U} = (H_{S,1},H_{I,1},\dots, H_{S,L},H_{I,L})^T$ provide complete information regarding state sizes for individuals and households, where the subscript~$k$ of~$S_k$ and~$I_k$ refers to the combination of age category and household size as defined above. We further introduce 
$\bm{Y} = 
(\bm{X}^T,\bm{U}^T)^T$.
Let $\bm N=(N_1,\dots,N_K)^T$ indicate the numbers of individuals in each subpopulation such that $\sum_{k=1}^K N_k=N$ is the total size of the population, and $\bm N_H=(N_{H,1},\dots,N_{H,K})^T$ where $N_{H,k}$ represents the number of households (of the size corresponding to subpopulation $k$) in which individuals from subpopulation $k$ reside. Since households typically contain individuals of multiple ages, the same physical household may be counted in multiple entries of $\bm N_H$. $\bm H=(H_{1},\dots,H_{L})^T$ stands for the numbers of households in each size class. 

Next, we introduce a $K\times L$ conformability matrix~$\bm{Z}$ of zeros and ones where each row $k$ represents a specific subpopulation (referring to the combination of age category and household size as above), and each column $l$ stands for a household size category. An entry of one in position~$(k,l)$ indicates that individuals from subpopulation~$k$ live in households of size category~$l$.
This allows us to identify which subpopulations belong to households of a particular size. 
An example of a conformability matrix $\bm{Z}$ is given in Section~\ref{conform}.

We further consider $K\times K$ contact matrices $\bm C^p$ and $\bm C^h$ that represent the rates of contact between individuals from the~$K$ subpopulations 
in the public and household settings, respectively. 
Section~\ref{section_contact} gives more details about the construction of these matrices.
Table~\ref{tab: nota} provides an overview of all variables and parameters used throughout this paper, along with their descriptions for easy reference.
\begin{table}[htbp]
\caption{Model variables, parameters and contact rate components for individual-household disease transmission dynamics. The subscript $+$~denotes a positive domain ($>0$), $0$ a non-negative one ($\geq 0$).}
\begin{tabular}{p{0.355\textwidth}|p{0.5\textwidth}|p{0.1\textwidth}}
\hline
\textbf{Symbol} & \textbf{Description} & \textbf{Domain} \\
\hline\hline
$K$ & Total number of subpopulations ($K = A \times L$) & $\mathbb{N}_+$ \\
\hline
$A$ & Number of age categories & $\mathbb{N}_+$ \\
\hline
$L$ & Number of household size categories & $\mathbb{N}_+$ \\
\hline
\multicolumn{3}{l}{\textit{Population state variables:}} \\
\hline
$\bm{S}, \bm{I}, \bm{R}$ &  Numbers of susceptible, infected, and recovered individuals & $\mathbb{N}_0^K$ \\
\hline
$\bm{H_S}, \bm{H_I}, \bm{H_R}$ &  Numbers of susceptible, infected, and recovered households & $\mathbb{N}_0^L$ \\
\hline
$\bm{X} = (S_1,I_1,\dots, S_K, I_K)^T$ & State variables representing susceptible and infected individuals & $\mathbb{N}_0^{2K}$ \\
\hline
$\bm{U} = (H_{S,1},H_{I,1},\dots, H_{S,L},H_{I,L})^T$ & State variables representing susceptible and infected households & $\mathbb{N}_0^{2L}$ \\
\hline
$\bm{x} = (s_1,i_1\dots,s_K,i_K)$ & Normalized state variables for individuals (each component divided by the corresponding subpopulation size $N_k$) & $[0,1]^{2K}$ \\
\hline
$\bm{h} = (h_{S,1},h_{I,1},\dots,h_{S,L},h_{I,L})$ & Normalized state variables for households (each component divided by the corresponding household class size $H_l$)& $[0,1]^{2L}$ \\
\hline
$\bm{Y}=(\bm{X}^T,\bm{U}^T)^T$

& Combined state variables of susceptible and infected individuals and households & $\mathbb{N}_0^{2K+2L}$ \\
\hline
\multicolumn{3}{l}{\textit{Population size variables:}} \\
\hline
$\bm{N} = (N_1,\dots,N_K)$ & Numbers of individuals in each subpopulation & $\mathbb{N}_+^K$ \\
\hline
$N = \sum_{k=1}^K N_k$ & Total population size & $\mathbb{N}_+$ \\
\hline
$\bm{N_H} = (N_{H,1},\dots,N_{H,K})$ & Numbers of households containing individuals from the respective subpopulations & $\mathbb{N}_+^K$ \\
\hline
$\bm{H} = (H_{1},\dots,H_{L})$ & Numbers of households in the each size class & $\mathbb{N}_+^L$ \\
\hline
$\bm{Z}=(z_{kl})_{k,l}$ & Conformability matrix & $\{0,1\}^{K \times L}$ \\
\hline
\multicolumn{3}{l}{\textit{Contact matrices and their components:}} \\
\hline
$\bm{C^p}, \bm{C^h}$ & Contact matrices for public and household settings & $\mathbb{R}_+^{K \times K}$ \\
\hline
$c_{ij}^p, c_{ij}^h$ & Components of $\bm{C^p}$ and $\bm{C^h}$: contact rates between subpopulations $i$ and $j$ in public and households & $\mathbb{R}_+$ \\
\hline
$\epsilon_k^p, \epsilon_k^h$ & Proportion of social contacts within own subpopulation $k$ in public/household settings& $[0,1]$ \\
\hline
$f_k^p, f_k^h$ & Interaction extent of subpopulation $k$ in public/household settings& $[0,1]$ \\
\hline
$a_k^p, a_k^h$ & Activity levels of subpopulation $k$ in public/household settings per day& $\mathbb{R}_+$ \\
\hline
$\omega_k$ & Public policy rate of subpopulation $k$ (from complete lockdown (0) to no restrictions (1)) & $[0,1]$ \\

\hline
\multicolumn{3}{l}{\textit{Epidemic parameters:}} \\
\hline
$\alpha$ & Transmission rate & $\mathbb{R}_+$ \\
\hline
$\beta_k$ & Recovery rate for subpopulation $k$ & $\mathbb{R}_+$ \\
\hline
$\nu_l$ & Recovery rate for households of size $l$ & $\mathbb{R}_+$ \\
\hline
\multicolumn{3}{l}{\textit{Transition probabilities:}} \\
\hline
$p_k$ & Main component of probability of an infection of a susceptible individual from subpopulation $k$ & $[0,1]$ \\
\hline
$p'_l$ & Main component of probability of an infection of a susceptible household of size category $l$ & $[0,1]$ \\
\hline
$\pi_k$ & Main component of probability of a recovery of an infective individual from subpopulation $k$ & $[0,1]$ \\
\hline
$\pi'_l$ & Main component of probability of a recovery of an infective household in size category $l$ & $[0,1]$ \\
\hline
\multicolumn{3}{l}{\textit{Stochastic process components:}} \\
\hline
$B_{k,1}(t)$, $B_{k,2}(t)$ & Brownian motions in SDEs for subpopulation $k$& $\mathbb{R}$ \\
\hline 
$B'_{l,1}(t)$, $B'_{l,2}(t)$ & Brownian motions in SDEs in household category $l$& $\mathbb{R}$ \\
\hline
\end{tabular}\label{tab: nota}
\end{table}
\newpage
\subsection{Contact Matrices}
\label{section_contact}

The definition of 'contact' in epidemiology is ambiguous and can refer to many aspects, which implies that there is no universal way of formalizing it. Instead, its expression varies according to how it is understood and the purpose of modeling. \cite{diekmann2000mathematical} identified two distinct interpretations: first, as the event of disease transmission itself, e.\,g.\  when infection is assumed to occur instantaneously upon contact; and second, as the pairing of two individuals for a time period during which multiple transmission events can occur. Our approach uses elements of the second interpretation by focusing on physical interactions of sufficient duration 
that provide opportunity for transmission, though we do not explicitly model varying contact durations.

In a large population, even in a vast city, an individual's interaction tends to be confined to a relatively small, repeated circle of contacts over time. This limitation influences the probability of meeting new people and spreading infection. Our framework represents this fact through the concept of contact rates, collected in a contact matrix which quantifies the frequency of potentially infectious interactions between different subpopulations. When modeling contact patterns, we account for heterogeneous interaction intensity between different subpopulations, reflecting variations in age, social activities, and workplace environments. Building on the same idea, \cite{hill2023implications} provided a framework for constructing matrices that represent different social structures and behaviors, from highly sociable to isolated subpopulations. We extend their approach by incorporating public policy parameters to explore intervention strategies. To that end, we further specify the components~$c_{ij}^p$ and~$c_{ij}^h$ of the contact matrices $\bm C^p$ and $\bm C^h$. For~$i,j=1,\ldots,K$, these represent the rates of contact between individuals from subpopulations~$i$ and~$j$ in the public and within households, respectively.

\subsubsection{Contacts in the public}

To capture both intra-group (within-subpopulation) and inter-group (between-subpopulations) interactions in public settings, we decompose the contact rate \( c^p_{ij} \) into

\begin{equation}\label{matrixconstr}
    c^p_{ij} = \left[ \underbrace{\delta_{ij} \epsilon^p_i}_{\text{self-interaction}} + \underbrace{(1 - \epsilon^p_i) f^p_j}_{\text{cross-group interaction}} \right] \underbrace{\omega_i \omega_j}_{\text{policy effect}}\!\!\!\!.
\end{equation}
This is motivated by the following model assumptions:
\begin{enumerate}
    \item \textbf{Self-interaction preference}: Individuals in subpopulation \( i \) allocate a proportion \( \epsilon^p_i \in [0,1] \) of their public contacts to members of their own subpopulation. This is modeled by the term \( \delta_{ij} \epsilon^p_i \), where the Kronecker delta \( \delta_{ij} \in \{0,1\} \) (which equals one if \( i = j \) and zero otherwise) ensures that this part applies only to within-group interactions. \\ 
    
    \item \textbf{Cross-group interaction}: The remaining proportion \( 1 - \epsilon^p_i \) of public contacts is distributed across other subpopulations. The number of such interactions depends on properties of the target subpopulation~$j$, namely on their public activity level~\( a^p_j \geq 0 \) (contacts per person from subpopulation~$j$ per unit time), population size \( N_j\), and self-interaction preference \( \epsilon^p_j \). The amount of contacts is captured by~\( f^p_j \in [0,1] \), a normalized term ensuring cross-group contacts are proportional to the 'availability' of subpopulation \( j \) in public spaces:  
    \[
    f^p_j = \frac{(1 - \epsilon^p_j) a^p_j N_j}{\sum_{k=1}^K (1 - \epsilon^p_k) a^p_k N_k}.
    \]  
    Here, \( 1 - \epsilon^p_j \) reflects subpopulation \( j \)'s openness to cross-group interactions in public, while \( a^p_j N_j \) gives the total number of public contacts generated by subpopulation $j$.  

    \item \textbf{Public policy restrictions}: Intervention measures such as lockdowns are incorporated via the factor~\( \omega_i \omega_j \), where \( \omega_i \in [0,1] \) represents the restriction level for subpopulation \( i \), with $\omega_i= 0$ indicating complete lockdown and $\omega_i= 1$ indicating no restrictions. The multiplicative form ensures that interactions cease if either subpopulation \( i \) or \( j \) is fully restricted (\( \omega_i = 0 \) or~\( \omega_j = 0 \)).   
\end{enumerate}
This structure of the contact rate~\eqref{matrixconstr} allows us to model both natural mixing preferences (via \( \epsilon^p_i \) and \( f^p_j \)) and targeted or uniform policy effects (via \( \omega_i \)). The resulting public contact matrix \( \mathbf{C}^p = (c^p_{ij})_{i,j=1,\ldots,K} \) reflects how subpopulations interact under specific behavioral and policy conditions. Figure \ref{fig:Contact_matrices}(a) displays an example of such a public contact matrix. The influence of~\( \epsilon^p_i \) and~\( \omega_i \) can be explored interactively using our \href{https://m8432n-houda-yaqine.shinyapps.io/shinyapp/}{Shiny app}. 

\begin{figure}[H]
  \centering
  \vspace{-0.5cm}
  \begin{minipage}{0.49\linewidth}
    \centering
    \includegraphics[width=\linewidth, trim=0cm 1cm 1cm 1cm, clip]{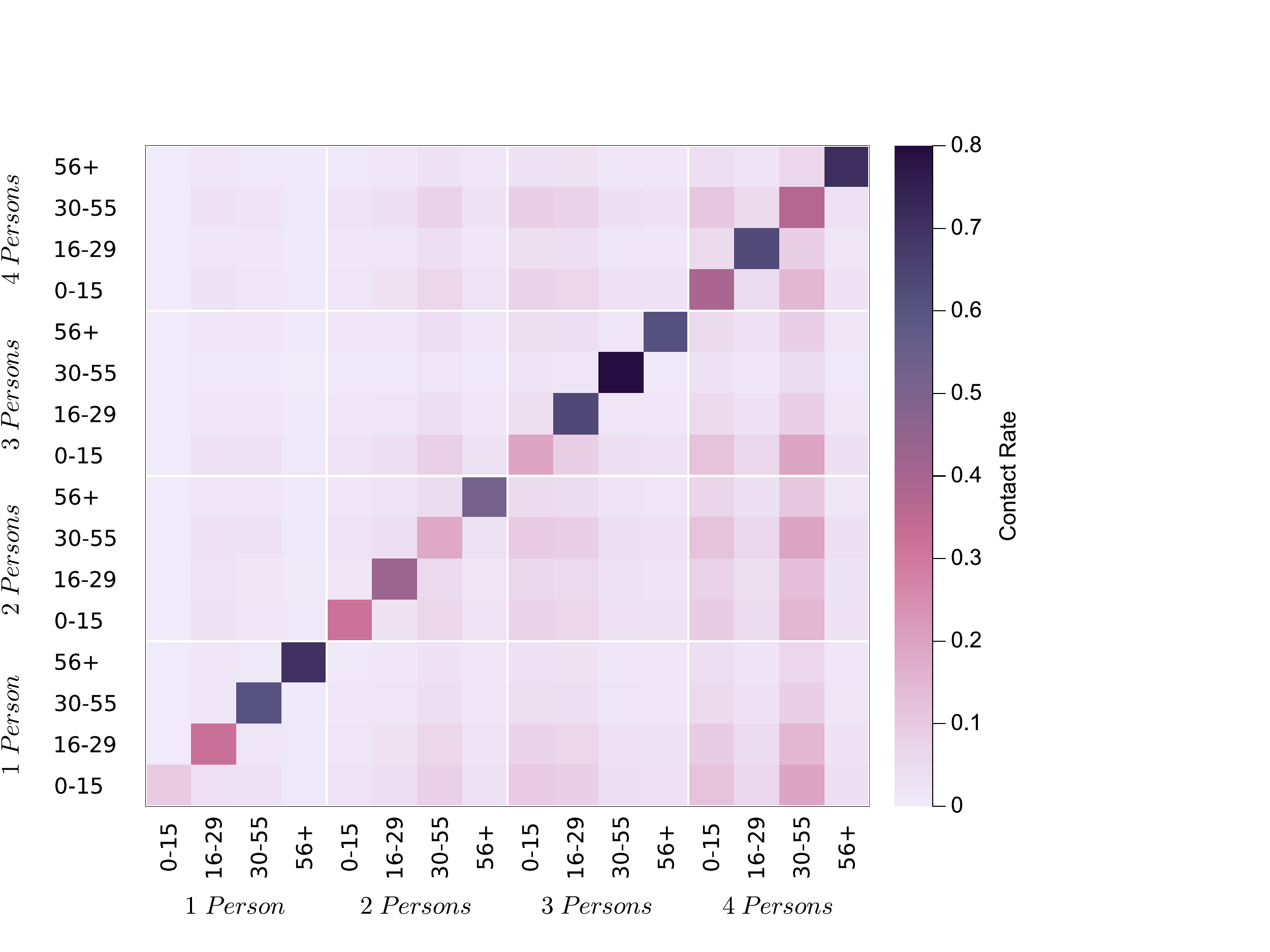}
    \vspace{-0.2cm} 
    \par\center\normalsize (a) in the public 
    \label{fig:cp}
  \end{minipage}
  \begin{minipage}{0.49\linewidth}
    \centering
    \includegraphics[width=\linewidth, trim=0cm 1cm 1cm 1cm, clip]{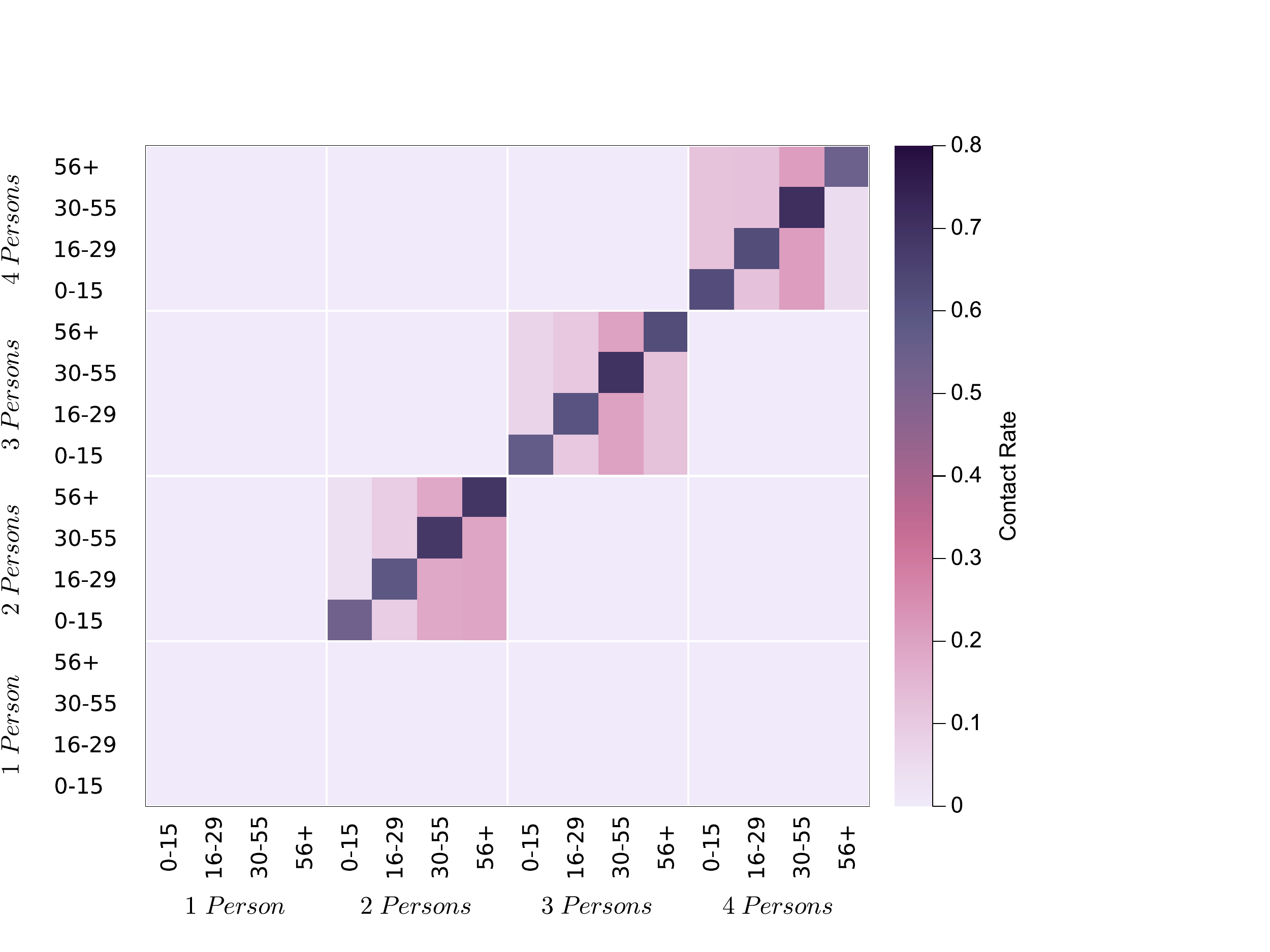}
    \vspace{-0.2cm} 
    \par\center\normalsize (b) within households 
    \label{fig:ch}
  \end{minipage}
  \vspace{-0.0cm}
  \caption{Examples of social contact patterns with four age categories ($0-15$ years, $16-29$ years, $30-55$ years, $56+$ years) and for four different household sizes ($1$ person, $2$ persons, $3$ persons, $4$ persons). (a) A matrix $\bm{C^p}$ for contacts in the public, 
  illustrating interactions across various subpopulations in public environments. (b) A 
  matrix $\bm{C^h}$ for contacts within households, highlighting the interaction patterns among 
  household members. The matrices appear similar across household size categories due 
  to the exemplary choice of uniform household size distributions and within-household 
  contact proportions for all household types.}
  \label{fig:Contact_matrices}
  \vspace{-0.3cm}
\end{figure}

\subsubsection{Contacts within households}

For constructing the contact matrix~$\textbf{C}^h=(c^h_{ij})_{i,j=1,\dots,K}$ for interactions within households, we make two key simplifying assumptions. First, we exclude home visits between households, meaning that contacts at home only occur among individuals living in the same household. Second, interaction intensity within households only depends on age categories rather than household sizes. As a consequence, the contact rate~$c_{ij}^h$ automatically equals zero if subpopulations~$i$ and~$j$ belong to different household size categories, which we express through an indicator~$\Delta_{ij}\in\{0,1\}$ in the following (where the value one refers to equal household size categories).
Furthermore, as a direct consequence of excluding home visits, the contact rate for individuals living in single-person households is set to zero. Figure~\ref{fig:Contact_matrices}(b) displays an example of a within-household contact matrix. 
We apply the approach from Equation~\eqref{matrixconstr}, but this time confined to the scope of each household size category:
\begin{equation}\label{matrix_constr_ch}    c^h_{ij}=\Delta_{ij}\left[\delta_{ij}\epsilon^h_i+(1-\epsilon^h_i)f^h_j\right]
\end{equation}
with
\begin{equation*}
    f^h_j = \frac{(1-\epsilon^h_j)a^h_j N_j}{\sum_{k=1}^K (1-\epsilon^h_k)a^h_k N_k}.
\end{equation*}
In these equations, the variables $\epsilon_j^h$, $f_j^h$ and~$ a_j^h$ retain their previous meanings, but now in the household setting.  The parameter $a_j^h$ represents the number of household contacts available per individual in subpopulation $j$. For a household of size~$l$, this average equals~$l-1$, reflecting the number of other household members available for contact. While household members interact repeatedly throughout the day, we model this as one effective contact per household member per day, with the understanding that the cumulative exposure from cohabitation provides sufficient opportunity for transmission. The public policy parameter is omitted, as lockdown measures do not restrict within-household interactions.

\subsection{SDEs for health states of individuals}
We now describe the probabilities of transition events in our model and from there proceed to SDE approximations. Let $S_k(t)$ and $I_k(t)$ denote the numbers of susceptible and infected individuals in subpopulation $k$ at time $t$, respectively. We first consider them as state components of a Markov jump process, with transitions occurring when susceptible individuals have infection-relevant contacts with infected ones, at rates that depend on the environment (public or household); and when infected individuals recover. Figure~\ref{fig:mjp_structure} illustrates the structure of this Markov jump process.
In the following, we derive expressions of the transition probabilities for infection and recovery. Recall that we assume closed subpopulations, i.\,e.\  individuals remain in their initially assigned subpopulations during the modelled time windows. For notational simplicity, we will omit the time index~$t$ hereafter.

The probability of infection of a susceptible from subpopulation
$k$ during a short time period  $\Delta t$ combines contributions from both public and household settings and is given by
 is given by
\begin{equation*}
    P_k(\bm Y)\Delta t + \circ (\Delta t)
\end{equation*}
with 
\begin{equation}
\label{prob_infection_of_ind}
    P_k(\bm Y)=\alpha\, S_k \sum_{j=1}^K \left[c_{kj}^p +
    \left( 
       \sum_{l=1}^L z_{kl}H_{I,l}
    \right)
    \left( 
       \sum_{l=1}^L z_{kl}\Big(1-\frac{H_{I,l}}{H_{l}}\Big)^{-1}
    \right)
    \frac{c_{kj}^h}{N_{H,k}}\right]\frac{I_j}{N_j},
\end{equation}
where the term \( o(\Delta t) \) represents the probability of multiple events occurring within the infinitesimal time interval \(\Delta t\), which becomes negligible as \(\Delta t\) approaches zero.

To explain Expression~\eqref{prob_infection_of_ind}, we break it down into pieces. We begin with the force of infection on subpopulation~$k$ in the public:
\begin{equation*}
\alpha\, S_k  \sum_{j=1}^K c_{kj}^p \frac{I_j}{N_j}\,.
\end{equation*}
This term accounts for infections resulting from public interactions between susceptibles in subpopulation $k$ and all infected individuals in other subpopulations including~$k$ itself, with a contact rate of $c^p_{kj}$. The parameter~$\alpha$ represents the infection transmission rate that quantifies the efficiency of disease transmission given a contact of sufficient duration to provide transmission opportunity, incorporating biological factors of the pathogen.

\noindent Similarly, the term
\begin{equation*}
    \alpha\,S_k  \sum_{j=1}^K  \frac{c_{kj}^h}{N_{H,k}} \frac{I_j}{N_j} 
\end{equation*}
represents infections resulting from interactions within households between susceptibles of subpopulation~$k$ and infected individuals of subpopulation~$j$. This term captures household transmission with the following components:
\begin{itemize}
    \item $c^h_{kj}$: the household contact rate between subpopulations $k$ and $j$
    \item $\frac{1}{N_{H,k}}$: normalization factor accounting for the fact that $S_k$ susceptible individuals are distributed across $N_{H,k}$ households, converting the contact rate to a per-household basis
    \item $\frac{I_j}{N_j}$: the proportion of infected individuals in subpopulation $j$, representing the baseline infection prevalence
\end{itemize}
The division by $N_{H,k}$ is essential because it scales the household contact rate appropriately: without it, the infection probability would not properly account for how susceptible individuals are distributed across multiple households.

\noindent Eventually, the term 
\begin{equation}
\left( 
       \sum_{l=1}^L z_{kl}H_{I,l}
    \right)
    \left( 
       \sum_{l=1}^L z_{kl}\Big(1-\frac{H_{I,l}}{H_{l}}\Big)^{-1}
    \right)
\label{contact_intensity_multiplier}
\end{equation}
(with components~$z_{kl}$ of the conformability matrix~$\bm{Z}$ introduced before)  serves as a contact intensity multiplier that concentrates household transmission risk. It increases the infection probability for individuals in infected households relative to the population average, reflecting the reality that once infection enters a household, household members face elevated risk. In simple terms, if the fraction of uninfected households $1-H_{I,l}/H_{l}$ is high, its reciprocal will be a small number, effectively decreasing the impact of within-household contacts. On the other hand, if most households have at least one infected person (meaning the fraction of uninfected households is low), the reciprocal will be large, increasing the impact of household contacts on the probability of infection \citep{Bayham2016}.

\noindent To simplify the notations, we define
\begin{equation}
    c_{kj}=c_{kj}^p  +
   \left(\sum_{l=1}^L z_{kl} H_{I,l}\right) \left(\sum_{l=1}^L z_{kl}\Big(1-\frac{H_{I,l}}{H_{l}}\Big)^{-1}\right)
    \frac{c_{kj}^h}{N_{H,k}}\,,
    \label{formula_ckj}
\end{equation}
leading to 
\begin{equation*}
\label{prob_infection_of_ind2}
    P_k(\bm Y)=\alpha\, S_k \sum_{j=1}^K c_{kj}\frac{I_j}{N_j}.
\end{equation*}

The probability of recovery of an infected individual from subpopulation~$k$ during the time period $\Delta t$ is modelled as
\begin{equation*}
    \Pi_k(\bm Y)\Delta t +\circ(\Delta t)
\end{equation*}
with
\begin{equation}
    \Pi_k(\bm Y) = \beta_k I_k
\end{equation}
where $\beta_k$ is the recovery rate for subpopulation $k$, representing the inverse of the respective average infectious period.

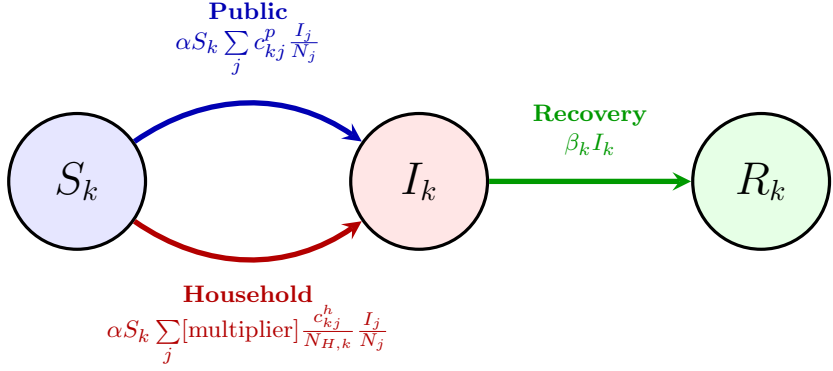
\begin{figure}[htbp]
\centering
\begin{tikzpicture}[
    node distance=4.5cm,
    state/.style={circle, draw=black, very thick, minimum size=1.8cm, font=\Large\bfseries},
    arrow/.style={->, >=stealth, very thick}
]

\node[state, fill=blue!10] (S) {$S_k$};
\node[state, fill=red!10, right of=S] (I) {$I_k$};
\node[state, fill=green!10, right of=I] (R) {$R_k$};

\draw[arrow, blue!70!black, line width=2pt, bend left=35] (S) to node[above=0.2cm, align=center, font=\small] {
    \textcolor{blue!70!black}{\textbf{Public}}\\
    $\alpha S_k \sum\limits_{j} c_{kj}^p \frac{I_j}{N_j}$
} (I);

\draw[arrow, red!70!black, line width=2pt, bend right=35] (S) to node[below=0.2cm, align=center, font=\small] {
    \textcolor{red!70!black}{\textbf{Household}}\\
    $\alpha S_k \sum\limits_{j} [\text{multiplier}] \frac{c_{kj}^h}{N_{H,k}} \frac{I_j}{N_j}$
} (I);

\draw[arrow, green!60!black, line width=2pt] (I) to node[above=0.2cm, align=center, font=\small] {
    \textcolor{green!60!black}{\textbf{Recovery}}\\
    $\beta_k I_k$
} (R);

\end{tikzpicture}
\caption{Markov jump process for subpopulation $k$ showing transitions through susceptible ($S_k$), infected ($I_k$), and recovered ($R_k$) states. Infection occurs via public or household transmission, with the household pathway incorporating a contact intensity multiplier (Equation~\ref{contact_intensity_multiplier}).}
\label{fig:mjp_structure}
\end{figure}

To derive SDEs from the Markov jump process, we follow the diffusion approximation procedure outlined in \cite{Fuchs2013}. First, we normalize the state variables by the system sizes to ensure scale invariance and better capture stochastic fluctuations:

\begin{align*}
      \bm x&=\left(s_1, i_1,\dots,s_K, i_K\right)^T=\left(\frac{S_1}{N_1},\frac{I_1}{N_1},\dots,\frac{S_K}{N_K},\frac{I_K}{N_K}\right)^T\\
\bm h&=\left(h_{S,1},h_{I,1},\dots,h_{S,L},h_{I,L}\right)^T=\left(\frac{H_{S,1}}{H_1},\frac{H_{I,1}}{H_1},\dots,\frac{H_{S,L}}{H_{L}},\frac{H_{I,L}}{H_{L}}\right)^T,
\end{align*} 
leading to $\bm y = (\bm x^T, \bm h^T)^T$.
Following this normalization, for $k=1,\dots,K$, the main part of the infection probability (leaving out~$\Delta t$ and $\circ(\Delta t)$ for simplicity) transforms to its intensive form
$$
p_k(\bm y)=\frac{1}{N_k} P_k(\bm Y) =\alpha\, s_k \sum_{j=1}^K c_{kj}i_j,
$$
and the main part of the recovery probability becomes
\begin{align*}
\pi_k(\bm y)=\frac{1}{N_k}\Pi_k(\bm Y)=\beta_k i_k.
\end{align*}

In what follows, we denote by $\bm{y}(t)$ the time-dependent variable where the temporal argument is explicitly shown, while we simplify the notation to $\bm{y}$ when the time dependence is contextually clear.
Through the diffusion approximation of the Markov jump process, we obtain the SDE describing the proportion of susceptibles in population $k$ over time, $s_k$:
\begin{equation}
\begin{cases}
ds_k(t)=\mu_k^S(\bm{y}(t))dt+\sigma_k^{SS}(\bm{y}(t))d B_{k,1}(t)\\
s_k(t_0)=s_k^0
\end{cases}
\end{equation}
with 
\begin{eqnarray*}
    \mu_k^S(\bm y)=- p_k(\bm y)
=-\alpha\,s_k \sum_{j=1}^K  c_{kj} i_j\,
\end{eqnarray*}
and
\begin{equation*}
    \sigma^{SS}_{k}(\bm y)=\sqrt{\frac{p_k(\bm y)}{N_k}}\,.
\end{equation*}

The SDE for the proportion of infected individuals in subpopulation~$k$ results as:
\begin{equation}
\begin{cases}
di_k(t)=\mu_k^I(\bm{y}(t))dt+\left[\sigma_k^{SI}(\bm{y}(t))dB_{k,1}(t)+\sigma_k^{II}(\bm{y}(t))d B_{k,2}(t)\right]\\
i_k(t_0)=i_k^0.
\end{cases}
\end{equation}
The drift term $\mu_k^I(\bm y)$ represents the rate of change in the infected proportion, given by the difference between the infection rate $p_k(\bm y)$ and the recovery rate $\pi_k(\bm y)$:
$$
\mu_k^I(\bm y)=p_k(\bm y)-\pi_k(\bm y)
=\alpha\,s_k\sum_{j=1}^K c_{kj} i_j-\beta_k i_k
$$
and the diffusion coefficients \citep{Fuchs2013}
\begin{align*}
    \sigma^{II}_{k}(\bm y)&=\sqrt{\frac{\pi _k(\bm y)}{N_k}}, &
    \sigma^{SI}_{k}(\bm y)&=-\sqrt{\frac{p_k(\bm y)}{N_k}}\,.
\end{align*}
Combining the SDEs for both susceptible and infected proportions, and incorporating their respective drift and diffusion terms, we obtain the complete system of SDEs describing the health states of individuals:
\begin{equation}\label{sdep}
\begin{cases}
ds_k(t)=\mu_k^S(\bm{y}(t))dt+\sigma_k^{SS}(\bm{y}(t))d B_{k,1}(t)\\
di_k(t)=\mu_k^I(\bm{y}(t))dt+\left[\sigma_k^{SI}(\bm{y}(t))d B_{k,1}(t)+\sigma_k^{II}(\bm{y}(t))d B_{k,2}(t)\right]\\
(s_k,i_k)(t_0)=(s_k^0,i_k^0),
\end{cases}
\end{equation}
where $k=1,\dots,K$, and the recovered proportion $r_k$ can be deduced from the conservation equation $r_k = 1-s_k-i_k$. 
The equations contain an initial time~$t_0$, initial states~$s_k^0$ and~$i_k^0$, and independent standard Brownian motions~$B_{k,1}(t)$ and~$B_{k,2}(t)$ for $t \geq t_0$, where $B_{k,1}(t_0) = B_{k,2}(t_0) = 0$ for all $k=1,\ldots,K$.

\subsection{SDEs for health states of households}

Households contain individuals from different subpopulations. For example, in a family, parents and children belong to different age categories but the same household size category~$l$. The conformability matrix $\bm Z$ enables extraction of all individuals of a specific age group from a particular household size category: multiplying $\bm S$ or $\bm I$ by the $l$-th column of $\bm Z$ yields all susceptibles or infectives in households of size~$l$. For three age categories and two household sizes, the conformity matrix can have the form:
\[\label{conform}
\bm Z=
\begin{array}{c|cc}
 & \text{size cat.\ 1} & \text{size cat.\ 2} \\ \hline
\text{(age cat.\ 1, size cat.\ 1)} & 1 & 0  \\
\text{(age cat.\ 2, size cat.\ 1)} & 1 & 0  \\
\text{(age cat.\ 3, size cat.\ 1)} & 1 & 0  \\
\text{(age cat.\ 1, size cat.\ 2)} & 0 & 1  \\
\text{(age cat.\ 2, size cat.\ 2)} & 0 & 1  \\
\text{(age cat.\ 3, size cat.\ 2)} & 0 & 1  \\
\end{array}
\]
Mathematically, the elements of $\bm Z$ are defined as
\begin{equation*}
    z_{kl}=
    \begin{cases}
        1, & \text{if } (l-1)A < k \leq lA\\
        0, & \text{otherwise}
    \end{cases}
\end{equation*}
for $k=1,\dots,K$ and $l=1,\dots,L$.

To derive the SDE model for household states, we follow the same diffusion approximation procedure as for individuals. For brevity, we work directly with normalized variables: $h_{S,l} = H_{S,l}/H_l$ and $h_{I,l} = H_{I,l}/H_l$ for household proportions, along with the individual proportions $s_k$ and $i_k$ defined earlier. The transition rate for a susceptible household becoming infected depends on: (1) the probability of any of its members encountering an infected individual in public spaces, and (2) the subsequent transmission of infection to that household member.

The probability of a susceptible household from size category~$l$ becoming infected during the time period $\Delta t$ is 

\begin{equation*}
p'_l(\bm y)\Delta t+\circ (\Delta t)
\end{equation*}
for $l=1,\dots,L$ with
\begin{equation}
p'_l(\bm y)=\alpha \frac{h_{S,l}}{H_{l}}\sum_{k=1}^K z_{kl}N_{k}s_{k}\sum_{j=1}^Kc_{kj}^p i_j.
\end{equation}
This formula conveys the idea that the transmission of infections to susceptible households exclusively occurs through their engagements in public spaces by definition. Specifically, the term $z_{kl}N_{k}s_{k}$ serves to extract those susceptible individuals from a household of size category~$l$ who encounter infected persons from various subpopulations in public spaces, characterized by a contact rate of $c_{kj}^p$. The multiplication by $h_{S,l}/H_{l}$ further refines this interaction by proportionally adjusting for the share of susceptibles within a given household size category~$l$. This construction aligns with our earlier definition of a household infection event as the initial infection occurring within an entirely susceptible household. 

Through the diffusion approximation, we obtain the SDE for the proportion of susceptible households of size $l=1,\dots,L$:
\begin{equation}
\begin{cases}
dh_{S,l}(t)=\mu^{S,h}_{l}(\bm{y}(t))dt+\sigma^{SS,h}_{l}(\bm{y}(t))d B'_{l,1}(t)\\
h_{S,l}(t_0)=h_{S,l}^0,
\end{cases}
\end{equation}
where 
\begin{equation*}
\mu_{l}^{S,h}(\bm y)=- p'_l(\bm y), \text{  }\quad \sigma^{SS,h}_{l}(\bm{y})=\sqrt{\frac{p'_l(\bm y)}{H_{l}}}\,.
\end{equation*}
For the recovery process, the probability of an infected household of size $l$ recovering during time period $\Delta t$ is
\begin{equation*}
\pi'_l(\bm y)\Delta t +\circ(\Delta t)
\end{equation*}
with
\begin{equation}
\pi'_l(\bm y) = \nu_l h_{I,l}
\end{equation}
where $\nu_l$ is the household recovery rate for size category~$l$.
The corresponding SDE for the proportion of infected households follows from the diffusion approximation:
\begin{equation}
\begin{cases}
dh_{I,l}(t)=\mu^{I,h}_{l}(\bm{y}(t))dt+[\sigma^{SI,h}_{l}(\bm{y}(t))d B'_{l,1}(t)+\sigma^{II,h}_{l}(\bm{y}(t))d B'_{l,2}(t)]\\
h_{I,l}(t_0)=h_{I,l}^0.
\end{cases}
\end{equation}
The drift term represents the rate of change:
\begin{equation*}
\mu^{I,h}_{l}(\bm y)=p'_l(\bm y)-\pi'_l(\bm y)
=\alpha \frac{h_{S,l}}{H_{l}}\sum_{k=1}^K z_{kl}N_{k}s_{k}\sum_{j=1}^Kc_{kj}^p i_j-\nu_l h_{I,l}
\end{equation*}
with diffusion terms:
\begin{align*}
\sigma^{SI,h}_{l}(\bm y)=-\sqrt{\frac{p'_ l(\bm y)}{H_{l}}}, 
\qquad\sigma^{II,h}_{l}(\bm y)&=\sqrt{\frac{\pi'_l(\bm y)}{H_{l}}}.
\end{align*}
Together, the SDE system describing the health states for households of size categories $l=1,\dots,L$ is defined as
\begin{equation}\label{sdeh_h}
    \begin{cases}
        dh_{S,l}(t)=\mu^{S,h}_{l}(\bm{y}(t))dt+\sigma^{SS,h}_{l}(\bm{y}(t))d B'_{l,1}(t)\\ 
        dh_{I,l}(t)=\mu^{I,h}_{l}(\bm{y}(t))dt+\left[\sigma^{1,h}_{l}(\bm{y}(t))d B'_{l,2}(t)+\sigma^{II,h}_{l}(\bm{y}(t))d B'_{l,II}(t)\right]\\
        (h_{S,l},h_{I,l})(t_0)=(h_{S,l}^0,h_{I,l}^0).
    \end{cases}    
\end{equation}
The equations are defined for $t \geq t_0$ with initial states~$h_{S,l}^0$ and~$h_{I,l}^0$, driven by independent standard Brownian motions~$B'_{l,1}(t)$ and~$B'_{l,2}(t)$ where $B'_{l,1}(t_0) = B'_{l,2}(t_0) = 0$ for all $l=1,\ldots,L$. The collection $\{B'_{l,1}(t), B'_{l,2}(t)\}_{l=1}^L$ consists of $2L$ Brownian motions that are mutually independent and independent of the individual-level Brownian motions $\{B_{k,1}(t), B_{k,2}(t)\}_{k=1}^K$.

The final system of SDEs coupling the state variables for individuals and households is a combination of  (\ref{sdep}) and (\ref{sdeh_h}) is
\begin{equation}\label{sdeh}
    \begin{cases}
    ds_k(t)=\mu_k^S(\bm{y}(t))dt+\sigma_k^{SS}(\bm{y}(t))d B_{k,1}(t)\\ 
    di_k(t)=\mu_k^I(\bm{y}(t))dt+\left[\sigma_k^{SI}(\bm{y}(t))d B_{k,1}(t)+\sigma_k^{II}(\bm{y}(t))d B_{k,2}(t)\right]\\
    dh_{S,l}(t)=\mu^{S,h}_{l}(\bm{y}(t))dt+\sigma^{SS,h}_{l}(\bm{y}(t))d B'_{l,1}(t)\\ 
    dh_{I,l}(t)=\mu^{I,h}_{l}(\bm{y}(t))dt+\left[\sigma^{SI,h}_{l}(\bm{y}(t))dB'_{l,1}(t)+\sigma^{II,h}_{l}(\bm{y}(t))d B'_{l,2}(t)\right]\\
        (s_k,i_k,h_{S,l},h_{I,l})(t_0)=(s_k^0,i_k^0,h_{S,l}^0,h_{I,l}^0)\\  
    \end{cases}    
\end{equation}
 with 
\begin{equation}   
\begin{array}{@{}rcl@{\qquad}rcl@{}}
\mu_k^S(\bm{y}) &=& -\alpha\, s_k \displaystyle\sum_{j=1}^K c_{kj}\, i_j
  & \mu^{S,h}_{l}(\bm{y}) &=& -\alpha \dfrac{h_{S,l}}{H_{l}} \displaystyle\sum_{k=1}^K z_{kl}\,N_{k}\, s_{k}\,\displaystyle\sum_{j=1}^K c_{kj}^p\, i_j \\[3mm]
\mu_k^I(\bm{y}) &=& \alpha\, s_k \displaystyle\sum_{j=1}^K c_{kj}\, i_j - \beta_k\, i_k,
  & \mu^{I,h}_{l}(\bm{y}) &=& \alpha \dfrac{h_{S,l}}{H_{l}} \displaystyle\sum_{k=1}^K z_{kl}\,N_{k}\, s_{k}\,\displaystyle\sum_{j=1}^K c_{kj}^p\, i_j - \nu_l\, h_{I,l} \\[3mm]
\sigma_{k}^{SS}(\bm{y}) &=& \sqrt{\dfrac{\alpha }{N_k}\, s_k\displaystyle\sum_{j=1}^K c_{kj}\, i_j},
  & \sigma^{SS,h}_{l}(\bm{y}) &=& \sqrt{\dfrac{\alpha}{H_{l}}\,\dfrac{h_{S,l}}{H_{l}}\displaystyle\sum_{k=1}^K z_{kl}\,N_{k}\, s_{k}\,\displaystyle\sum_{j=1}^K c_{kj}^p\, i_j} \\[3mm]
\sigma_{k}^{II}(\bm{y}) &=& \sqrt{\dfrac{\beta_k\, i_k}{N_k}},
  & \sigma^{II,h}_{l}(\bm{y}) &=& \sqrt{\dfrac{\nu_l\, h_{I,l}}{H_{l}}} \\[3mm]
\sigma_{k}^{SI}(\bm{y}) &=& -\sigma_{k}^{SS}(\bm{y}),
  & \sigma^{SI,h}_{l}(\bm{y}) &=& -\sigma^{SS,h}_{l}(\bm{y})
\end{array}
\label{SDE_terms}
\end{equation}
for $k=1,\dots,K$ and $l=1,\dots,L$.

The existence and uniqueness of a strong solution to the system of SDEs~\eqref{sdeh} requires careful treatment due to coefficient degeneracies: the diffusion terms contain non-Lipschitz square roots at zero, and the drift has a potential singularity at $h_{I,l} = 1$ from the term $(1-h_{I,l})^{-1}$ in $c_{kj}$. We establish existence and uniqueness using the Yamada-Watanabe pathwise uniqueness theorem \citep{Mao2008} for non-Lipschitz diffusions, combined with a localization argument. We prove that solutions remain in the physical domain $\mathcal{D} = \{(s_k,i_k,h_{S,l},h_{I,l}) : s_k,i_k,h_{S,l},h_{I,l} \geq 0, s_k+i_k \leq 1, h_{S,l}+h_{I,l} \leq 1\}$ almost surely, ensuring global existence and preservation of physical constraints. The detailed proof is in Appendix~\ref{Appendix C: Existence}.

\section{Multi-Type Branching Process Approximation}
\label{sec:branching_process_approx}

To derive the basic reproduction number $\mathcal{R}_0$ for our stochastic epidemic model, we employ the branching process approximation framework established by \cite{allen2017primer} (who use the term continuous-time Markov chains instead of MJP). Since our model is formulated as a system of SDEs derived from an underlying MJP through diffusion approximation, we apply this framework to the embedded discrete jump structure. In the stochastic setting, $\mathcal{R}_0$ determines the probability of disease extinction: when $\mathcal{R}_0 \leq 1$, the infection dies out soon with probability one, while when $\mathcal{R}_0 > 1$, there exists a positive probability of a major outbreak, though extinction remains possible due to stochastic fluctuations.

Following \cite{allen2017primer}, we exploit the result that near the disease-free equilibrium (DFE) when the number of infected individuals is small, the MJP epidemic model can be approximated by a multi-type branching process. We define the DFE in an epidemic model as a balanced state where the disease is absent from the population \citep{Brauer2019}. Mathematically, it corresponds to an equilibrium where all drift terms equal zero and all infection compartments are empty. At the DFE, the population consists solely of susceptible individuals and, depending on the model structure, recovered individuals.
To determine the DFE of the SDE system (\ref{sdeh}), we set all the drift terms to zero and require all infection compartments to be empty. The unique DFE is given by
\begin{align*}
    \bm{y}^* &= (s_1^*,i_1^*, \dots, s_K^*, i_K^*, h_{S,1}^*,h_{I,1}^*, \dots, h_{S,L}^*,h_{I,L}^*)\\
    & = (1, 0,\ldots, 1, 0,  1,0,\ldots, 1,0).
\end{align*}

In \citet{allen2017primer}, the approximation by multi-type branching processes is justified because, \textit{near} the DFE, the susceptible populations remain approximately at their equilibrium values ($s_k \approx 1$), effectively linearizing the nonlinear infection dynamics. As Allen demonstrates, this linearization transforms the transition rates into forms that depend linearly on the number of infectives, precisely the mathematical structure of a branching process.

Our methodology follows a systematic two-step approach. First, in Section~\ref{sec:linearization}, we linearize the transition rates of our MJP near the DFE, showing that household-mediated transmission effects become negligible in this limit. This linearization yields transition rates that depend only on public contact patterns, with household structure effects vanishing near the DFE. Second, in Section~\ref{sec:branching}, we construct the multi-type branching process using these linearized rates and apply standard branching process theory \citep{Mode1971, athreya2017branching} to derive $\mathcal{R}_0$.

This approach offers several advantages. It provides a rigorous theoretical foundation for computing $\mathcal{R}_0$ that applies to both the underlying MJP and its SDE approximation. Moreover, it avoids the mathematical complexity of analyzing SDEs directly while leveraging the well-established theory of branching processes. The resulting expression for $\mathcal{R}_0$ governs the threshold behavior of the epidemic outlined above.

\subsection{Linearization Near the Disease-Free Equilibrium}
\label{sec:linearization}

At the considered DFE~$\bm{y}^*$, the state variables satisfy $s_k = 1$ and $i_k = 0$ for all $k \in \{1,\ldots,K\}$, and $h_{S,l} = 1$ and $h_{I,l} = 0$ for all $l \in \{1,\ldots,L\}$.

\subsubsection{Linearization of Contact Rates}

Near the DFE, the coupling term in Equation~\eqref{formula_ckj} simplifies considerably. Since $H_{I,l} \approx 0$, we have $(1-H_{I,l}/H_l)^{-1} \approx 1$ and $\sum_{l=1}^L z_{kl}H_{I,l} \approx 0$. Consequently, the contact rate reduces to its public component:
\begin{equation}
c_{kj} \approx c_{kj}^p \quad \text{for } H_{I,l} \approx 0.
\label{eq:ckj_linearized_dfe}
\end{equation}

This linearization reveals that household-mediated transmission effects are negligible compared to public transmission near the DFE, justifying their omission in the branching process approximation for threshold analysis \citep{Ball1995,Pellis2012}.

\subsubsection{Linearization of Transition Rates}
\label{sec:linear_trans_rates}

In the branching process, we model the evolution and count the number of infected individuals. The transition rates from the MJP near the DFE~$\bm{y}^*$ are as follows. For an infected individual from subpopulation~$k\in\{1,\ldots,K\}$, we have two primary events:
\textbf{Recovery event:} The considered individual recovers at rate
\begin{equation*}
\lambda_k^\text{(rec)} = \beta_k.
\end{equation*}
\textbf{Infection event:} The considered individual passes the infection to a  susceptible individual from subpopulation~$j\in\{1,\ldots,K\}$  through contagious contact. Near the DFE, where $s_j \approx 1$, the rate of this event is
\begin{equation*}
\lambda_{k \to j}^{\text{(inf)}} = \alpha c_{kj}^p s_j \approx \alpha c_{kj}^p \quad \text{for each } j \in \{1,\ldots,K\}.
\end{equation*}

The \textbf{total event} rate for an infected individual from subpopulation~$k$ is therefore
 \begin{equation*}
\lambda_k = \lambda_k^\text{(rec)} + \sum_{j=1}^K \lambda_{k \to j}^\text{(inf)} = \beta_k + \alpha \sum_{j=1}^K c_{kj}^p.
\end{equation*}

\subsection{Construction of the Multi-Type Branching Process}
\label{sec:branching}

We construct a continuous-time multi-type branching process \citep{Mode1971} where types correspond to the sets of infected individuals in the $K$ subpopulations. Each infected individual of type~$k \in \{1,\ldots,K\}$ can undergo one of the two kinds of events described in Section~\ref{sec:linear_trans_rates}. 

In the branching process approximation, each event contributes a term to the probability generating function (PGF). Following the continuous-time branching process framework \citep{Mode1971}, the PGF for type $k$ is constructed as:
\begin{equation}
f_k(\mathbf{u}) = \sum_{\text{events}} \frac{\text{rate of event}}{\text{total event rate}} \times \text{(offspring pattern)},
\label{eq:pgf_construction}
\end{equation}
where $\mathbf{u} = (u_1,\ldots,u_K)^T \in [0,1]^K$, the sum runs over all events for the considered individual, and the offspring pattern encodes the state after the event occurs via the exponent of the components of~$\mathbf{u}$.
For our epidemic model, where the infecting individual remains in the population \citep{Kendall1948,Ball1983}, 
the kinds of possible events have been described above: there is either recovery, resulting in zero offspring since the considered individual exits the infectious state, or infection, resulting in one offspring of type~$k$ (the parent remains in the infectious state) and one offspring of type~$j$ (newly created).

Combining these contributions as in Equation~\eqref{eq:pgf_construction} yields the complete (approximate) PGF:
\begin{equation}
f_k(\mathbf{u}) 
= \frac{\lambda_k^\text{(rec)}}{\lambda_k} u_k^0 + \sum_{j=1}^K \frac{\lambda_{k\rightarrow j}^\text{(inf)}}{\lambda_k} u_k^1 u_j^1
= \frac{\beta_k}{\lambda_k} + \sum_{j=1}^K \frac{\alpha c_{kj}^p}{\lambda_k} u_k u_j
\label{eq:pgf_branching}
\end{equation}
with 
$\lambda_k = \beta_k + \alpha \sum_{j=1}^K c_{kj}^p$ as derived above.

\subsection{Mean Offspring Matrix and Basic Reproduction Number}

The mean offspring matrix $\mathbf{M} = (m_{kj})_{k,j=1,\ldots,K}$ has the following entries:
\begin{eqnarray*}
m_{kj} = \frac{\partial f_k(\mathbf{u})}{\partial u_j}\bigg|_{\mathbf{u}=\mathbf{1}} &=& \begin{cases}
\frac{\alpha c_{kj}^p}{\lambda_k} & \text{if } k \neq j \\[2mm]
\frac{2\alpha c_{kj}^p}{\lambda_k} & \text{if } k = j
\end{cases}
\label{eq:mean_matrix_entries}
\end{eqnarray*}
with~$\mathbf{1}=(1,\ldots,1)$ and~$\mathbf{C}^p=(c_{kj}^p)_{k,j=1,\ldots,K}$ the public contact matrix. 
The diagonal entries of~$\mathbf{M}$ account for the parent remaining in the population, a characteristic feature of epidemic branching processes \citep{Ball1983}.  In matrix form, this can be written as:
\begin{equation}
\mathbf{M} = \alpha \cdot \text{diag}(\lambda_1^{-1}, \ldots, \lambda_K^{-1}) \cdot (\mathbf{C}^p+\text{diag}(\mathbf{C}^p)).
\label{eq:mean_matrix_form}
\end{equation}
The basic reproduction number, following the next-generation matrix approach \citep{Diekmann1990}, is given by the spectral radius of the mean offspring matrix:
\begin{equation*}
\mathcal{R}_0 = \rho(\mathbf{M}).
\end{equation*}
A closed-form expression for $\mathcal{R}_0$ in terms of the model parameters is generally not available for $K \geq 3$ subpopulations. The spectral radius is defined as the largest eigenvalue in modulus, $\gamma$, which requires solving the characteristic polynomial \linebreak $\det(\mathbf{M} - \gamma \mathbf{I}) = 0$. For a $K \times K$ matrix, this yields a polynomial of degree $K$, and by the Abel-Ruffini theorem \citep{tignol2015galois}, no general algebraic solution exists for polynomials of degree five or higher. Even for $K = 3$ or $K = 4$, while solutions exist in principle, they involve complicated expressions in the matrix entries that provide little analytical insight given the already complex dependence of $c_{kj}^p$ on the underlying epidemiological and behavioral parameters.

In practice, $\mathcal{R}_0$ is therefore computed numerically. Given a parameter configuration, the computation proceeds as follows: (i) construct the public contact matrix $\mathbf{C}^p$ from the activity levels, mixing preferences, and policy restrictions; (ii) compute the total event rates $\lambda_k = \beta_k + \alpha \sum_{j=1}^K c_{kj}^p$ for each subpopulation; (iii) assemble the mean offspring matrix $\mathbf{M}$ according to Equation~\eqref{eq:mean_matrix_form}; and (iv) compute the eigenvalues of~$\mathbf{M}$ and return the largest real part. Standard numerical linear algebra routines perform this eigenvalue computation with a complexity of $O(K^3)$ \citep[Ch.~7]{golub2013matrix}, making the procedure efficient even for models with many subpopulations. The sensitivity analysis in Section~\ref{sec:GSA} employs this numerical approach to systematically explore how $\mathcal{R}_0$ responds to variations in the model parameters.

While household-mediated transmission is negligible for determining $\mathcal{R}_0$ near the DFE, it becomes significant as the epidemic progresses. The enhanced within-household transmission, captured by the term $(1-H_{I,l}/H_l)^{-1}$ in Equation~\eqref{formula_ckj}, creates positive feedback that accelerates epidemic spread away from the DFE. However, for the determination of $\mathcal{R}_0$ and initial epidemic growth rates, the first-order approximation presented here is both sufficient and exact in the limit as the system approaches the DFE.

\section{Simulation Studies} \label{sec:simulations}
This section explores how our household-structured epidemic model captures real-world disease transmission dynamics and assesses its utility for public health planning. Through detailed numerical simulations using both deterministic and stochastic versions of our model, our analysis is structured into four main components, each designed to illustrate different aspects of the model's behavior and epidemiological implications. First, we conduct a comparative analysis using the deterministic versions of our model against simpler variants (non-structured, age-only, and household-only models; Section~\ref{sec:comparison_othermodels}). We choose the deterministic framework for this comparison as it allows us to focus on the fundamental structural differences between models by examining their mean-field behavior. The second component examines the relative contributions of household and public contacts in disease transmission, providing insights into the interaction between these distinct transmission pathways (Section~\ref{sec:transmission_pathways}). The third component is an investigation of how public health interventions might modulate household exposure intensity, capturing the natural randomness in disease transmission processes (Section~\ref{sec:effect_public_policy}). Afterwards, we perform a global sensitivity analysis to identify which model parameters most significantly influence the basic reproduction number $\mathcal{R}_0$ (Section~\ref{sec:GSA}). Finally, we investigate the impact of demographic stochasticity by comparing deterministic (ODE) and stochastic (SDE) implementations of our model, revealing how noise amplification varies with household size and age structure, and quantifying the resulting variability in epidemic trajectories and peak timing across different subpopulations (Section~\ref{sec: stoch_effect}).

In our simulations, we make several key assumptions regarding population distribution, household composition, and initial infection rates. While acknowledging that population demographics significantly influence simulation outcomes, we aim to minimize these effects to focus on the core components of the model and their relative impacts on overall system behavior. To achieve this, we implement a balanced population distribution across all subpopulations, with equal numbers in each category, except for the subpopulation representing children in single households, which is set to zero to reflect the reality that minors do not typically live alone. We recognize that alternative scenarios, such as age-skewed populations or varying household size distributions in urban versus rural setting, could be subject of future studies. 

For simplification, we assume that each 'individual' in our model represents an entire household. When we track an individual of age $a$ in household size category $l$, this individual serves as a representative for their entire household of size $l$. This representation does not eliminate household structure from the model; rather, the effects of other household members are captured through the household-level state variables ($H_{S,l}$, $H_{I,l}$) and the within-household transmission terms in the contact rate $c_{kj}$. Under this assumption, no two individuals in the simulation share the same physical household, which implies that $N_{H,k} = N_k$ (the number of households containing individuals from subpopulation $k$ equals the number of individuals in that subpopulation).

Our simulation study uses contact data from the POLYMOD survey \citep{Mossong2020}, a large-scale study of social contact patterns conducted across eight European countries in 2005-2006. This dataset has become a standard reference in epidemiological modeling due to its comprehensive information about household compositions and participant demographics. The survey captured both physical and non-physical contacts, recording the age, location, and duration of each interaction. Additionally, it provides in-depth insights into the daily social interactions of the participants, which we use to derive the average contact rates $a_i^p$ for each subpopulation. These rates are incorporated into the construction of our public contact matrix $\bm{C^p}$ according to Equation~\eqref{matrixconstr}. In our study, the population is categorized into $A=4$ age categories: $(0-15)$, $(16-29)$, $(30-55)$, and $56+$ years, as well as $L=4$ household sizes: 1, 2, 3 and~4. Consequently, this results in a total of $K=L\times A=16$ distinct subpopulations; the first subpopulation, however, is left empty since the data does not contain children in single households. 
Table~\ref{tab:init} summarizes all parameter values and initial conditions used throughout our simulations, unless otherwise specified. These baseline values serve as our reference point for exploring the model's behavior under various scenarios.

\begin{table}[htbp]
\centering
\caption{Initial Parameter Values for the Simulation Study}
\begin{tabular}{|l|p{0.4\textwidth}|p{0.3\textwidth}|}
\hline
\textbf{Parameter} & \textbf{Value(s)} & \textbf{Description} \\
\hline
$A$ & 4 & Number of age categories \\
\hline
$L$ & 4 & Number of household sizes \\
\hline
$K$ & 16 & Total number of subpopulations ($A \times L$) \\
\hline
$\alpha$ & 0.2 & Transmission rate \\
\hline
$\epsilon$ & \begin{tabular}[c]{@{}l@{}}
Age 0-15: (0.5, 0.5, 0.5, 0.5)\\
Age 16-29: (0.5, 0.5, 0.5, 0.5)\\
Age 30-55: (0.5, 0.5, 0.5, 0.5)\\
Age 56+: (0.5, 0.5, 0.5, 0.5)
\end{tabular} & Within-subpopulation contact preference \\
\hline
$\omega$ & \begin{tabular}[c]{@{}l@{}}
Age 0-15: (1.0, 1.0, 1.0, 1.0)\\
Age 16-29: (1.0, 1.0, 1.0, 1.0)\\
Age 30-55: (1.0, 1.0, 1.0, 1.0)\\
Age 56+: (1.0, 1.0, 1.0, 1.0)
\end{tabular} & Public policy parameter \\
\hline
$\beta$  & \begin{tabular}[c]{@{}l@{}}
Age 0-15: 0.200\\
Age 16-29: 0.167\\
Age 30-55: 0.143\\
Age 56+: 0.071
\end{tabular} & Recovery rates for each age category \\
\hline
$\nu$ & \begin{tabular}[c]{@{}l@{}}
Size 1: 0.11\\
Size 2: 0.05\\
Size 3: 0.04\\
Size 4: 0.02
\end{tabular} & Recovery rates for each household size \\
\hline
$N$ & \begin{tabular}[c]{@{}l@{}}
Age 0-15: (0, 333, 333, 333)\\
Age 16-29: (333, 333, 333, 333)\\
Age 30-55: (333, 333, 333, 333)\\
Age 56+: (333, 333, 333, 333)
\end{tabular} & Number of individuals in each subpopulation \\
\hline
$H$ & \begin{tabular}[c]{@{}l@{}}
Size 1: 999\\
Size 2: 1,332\\
Size 3: 1,332\\
Size 4: 1,332
\end{tabular} & Number of households of each size \\
\hline
$a^p$ & \begin{tabular}[c]{@{}l@{}}
Age 0-15: (NA, 2.77, 2.91, 1.3)\\
Age 16-29: (4.7, 3.39, 2.44, 1.5)\\
Age 30-55: (2.93, 4, 1.86, 1.23)\\
Age 56+: (2.99, 2.62, 2.42, 5.5)
\end{tabular} & Average contact rates by age category and household size \\
\hline
$I_0$ & \begin{tabular}[c]{@{}l@{}}
Age 0-15: (0, 20, 0, 60)\\
Age 16-29: (20, 20, 20, 20)\\
Age 30-55: (20, 40, 20, 20)\\
Age 56+: (20, 20, 20, 20)
\end{tabular} & Initial number of infected individuals  \\
\hline
$H_I^0$ & \begin{tabular}[c]{@{}l@{}}
Size 1: 60\\
Size 2: 120\\
Size 3: 60\\
Size 4: 120
\end{tabular} & Initial number of infected households of each size\\
\hline
\end{tabular}
\label{tab:init}
\end{table}
\newpage

To simulate the system (\ref{sdeh}), we utilize the \texttt{DifferentialEquations} package in \texttt{Julia}, implementing the \texttt{ISSEulerHeun()} method, an order 0.5 split-step Stratonovich implicit solver. The choice of an implicit solver is motivated by the system's numerical stiffness, which arises from the different magnitudes of coefficients in the population and household equations (see e.g. Chapter 8 in \citeauthor{Sarkka2019}, \citeyear{Sarkka2019}).

\subsection{Comparison between the Model With Household Structure and Other Models}\label{sec:comparison_othermodels}
To systematically compare different model structures, we examine four model variants:

\begin{itemize}
    \item Model U: Unstructured population (L=1, A=1)
    \item Model A: Age-stratified only (L=1, A=4)
    \item Model H: Household size-stratified only (L=4, A=1)
    \item Model AH: Our full age-household structure model from Section~\ref{sec: model} (L=4, A=4)
\end{itemize}

Detailed formulations of models U, A, and H are provided in Appendix \ref{secA}. To ensure a meaningful simulation-based comparison across these structurally different models, we adopt a reverse engineering approach: we first fix the desired final epidemic size (i.e.the cumulative proportion of the population that has been infected by the end of the epidemic) at 80\% for all models, then numerically determine the transmission ($\alpha$) and recovery ($\beta$) rate combinations that achieve this target. This approach provides a common baseline for comparison, as the final epidemic size serves as an observable, meaningful metric \citep{pellis2020systematic}, while parameters like transmission rates carry different interpretations across models.

To focus specifically on model structure rather than demographic variations, we consider the same recovery rates value ($\beta$) across all subpopulations within each stratified model and use a balanced population size distribution. For the parameter search, we employ a Sobol' sequence-based method \citep{Sobol1967OnTD} to efficiently identify the combinations of $\alpha$ and $\beta$ that yield our target 80\% final epidemic size. This quasi-random sampling approach systematically explores the parameter space more effectively than traditional grid search methods \citep{Kucherenko2015ExploringMS}, which is valuable as the relationship between parameters and final size becomes more complex in structured models.

Since the parameters $\alpha$ and $\beta$ carry fundamentally different interpretations across models due to their varying structures, direct comparison of these parameters would not provide meaningful insights. Instead, we analyze the temporal dynamics of the epidemics through the distribution of time to peak infection. For each model, we simulate the epidemic using the parameter combinations ($\alpha$, $\beta$) identified from our backward search, allowing us to compare the models' behavior while maintaining a consistent outcome measure.

In the following analysis, we examine the distribution of time-to-peak-infection across the four model variants. Using violin plots on a logarithmic scale (Figure \ref{fig:peak time violins}), we can visualize how different model structures influence both the average timing of the epidemic peak and its variability. 

\begin{figure}[H]
    \centering
    \includegraphics[width=\textwidth]{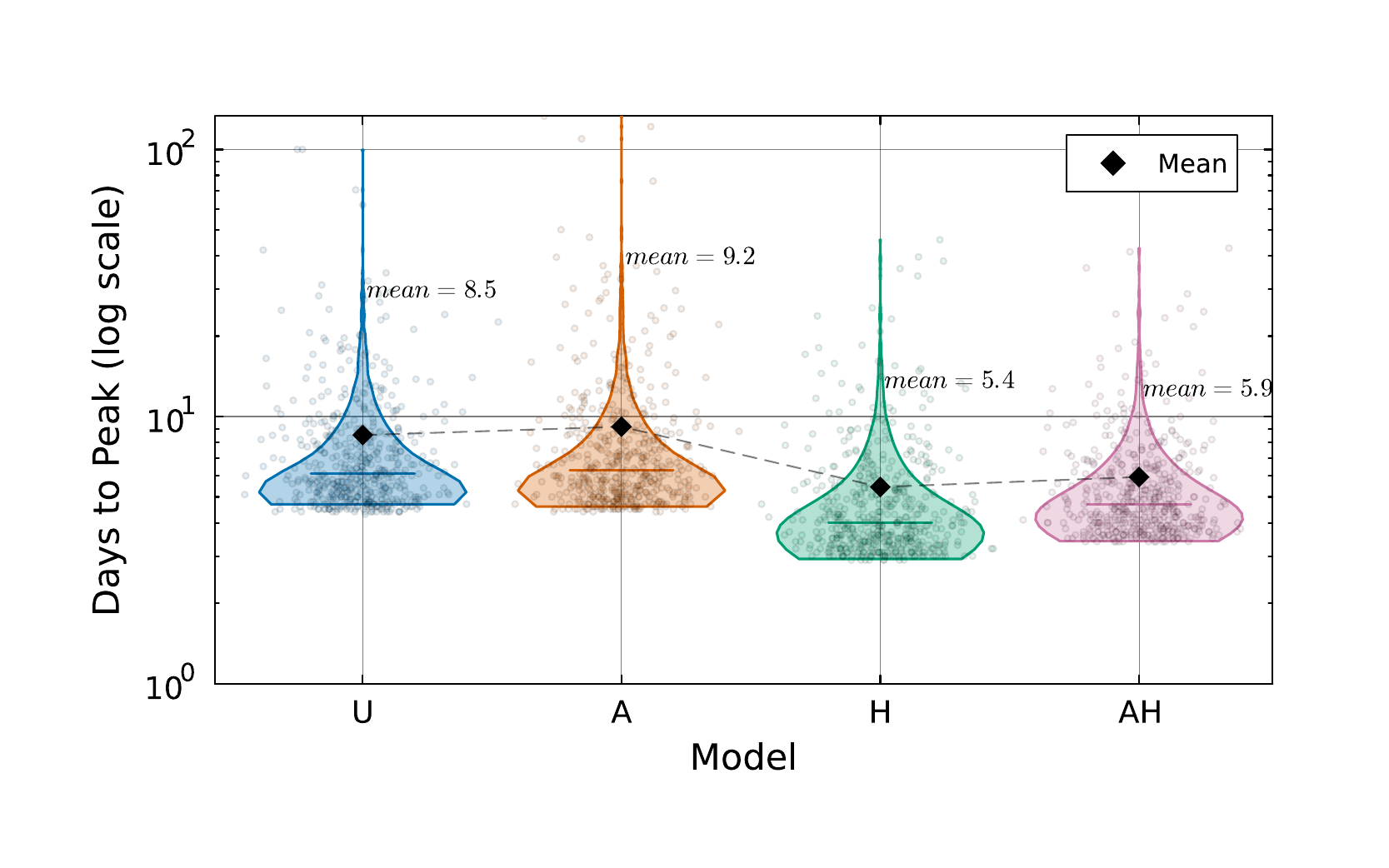}
    \caption{Comparison of time-to-peak distributions across four epidemic models (U: Unstructured, A: Age only, H: Household only, and AH: Age and household).  Each violin displays the distribution of peak times from $500$ unique parameter combinations $(\alpha,\beta)$, all of which were constrained to produce an $80\%$ final epidemic size. The distributions are shown on a logarithmic scale (base 10). Black diamonds represent the mean value for each model, while  individual data points are overlaid in gray to show the raw distribution. The household-based models (H, mean 5.4 days; AH, mean 5.9 days) peak earlier than the non-household models (U, mean 8.5 days; A, mean 9.2 days).}
    \label{fig:peak time violins}
\end{figure}
The comparison in the plot reveals a clear distinction between models with and without household structure. Models that incorporate household dynamics (H and AH) consistently peak earlier (means of 5.4 and 5.2 days, respectively) compared to their non-household counterparts (U and A, with means of 8.5 and 9.2 days).  To statistically validate these visual observations, we first analyzed the underlying data: the $500$-sample peak time vectors for each of the four models. As suggested by the asymmetric, long-tailed shapes in the plot, formal normality tests (e.g. Shapiro-Wilk) confirmed that all four distributions are significantly skewed and non-normal. This non-normality, combined with our $2\times 2$ factorial model design (Age: Yes/No, Household: Yes/No), makes a standard two-way ANOVA inappropriate. We, therefore, employed the Schneirer-Ray-Hare test, which is the non-parametric equivalent. This test allows us to move beyond a simple four-group comparison and instead quantify the independent main effect of each structural component (age and household size) as well as any potential interaction effect between them.  The statistical results from the Scheirer-Ray-Hare test \citep{sokal1995biometry} 

are summarized in Table~\ref{tab:srh_results}.

\begin{table}[htbp]
\centering
\caption{Results of the Scheirer-Ray-Hare test on the time to peak. This test analyzes the $2\times 2$ factorial design (age $\times$ household size) on the ranks of the peak-time data. Df: degrees of freedom, $\chi^2$: test statistic}
\label{tab:srh_results}
\begin{tabular}{lccr}
\toprule
Source of variation & Df & $\chi^2$ & $p$-value \\
\midrule
\textbf{Household} (main effect) & 1 & 537.11 & \textbf{$< 0.001$} \\
\textbf{Age} (main effect) & 1 & 0.00 & 0.947 \\
\textbf{Household $\times$ Age} (interaction) & 1 & 0.74 & 0.389 \\
\bottomrule
\end{tabular}
\end{table}

As shown in Table~\ref{tab:srh_results}, 
the addition of household structure has a 
statistically significant impact on accelerating the time to peak ($p < 0.001$). This acceleration is likely due to the dual transmission pathways in household models. In contrast, the addition of age structure alone does not show a statistically significant main effect on peak timing ($p = 0.947$). This is consistent with our visual observation of only minor mean shifts when adding age (e.g., U to A). Finally, the test found no significant interaction effect ($p = 0.389$), indicating the large effect of households is consistent regardless of whether age is also included. These findings allow us to conclude that household structure is the dominant factor influencing the temporal dynamics of disease spread in these models. These results have clear implications for public health, as they indicate that model structure, specifically the inclusion of household dynamics substantially affects the predicted timing of peak infection, which would influence the timeline for implementing intervention strategies.

\subsection{Decomposition of Transmission Pathways: Public versus Household Transmission}\label{sec:transmission_pathways}
Having established the different peak timing behaviors across model structures, we now focus specifically on our household-structured model (AH) to examine the underlying mechanisms driving these differences by decomposing the transmission pathways in our household-structured model. Specifically, we distinguish between public transmission and household transmission within our model to understand their relative contributions to disease spread.
The model incorporates two primary transmission pathways: public and household transmission. For a given household size category and a specific age category, constituting subpopulation~$k\in\{1,\ldots,K\}$, the public transmission force is given by:
\begin{equation*}
\alpha   s_k \sum_{j=1}^K c_{kj}^p i_j.
\end{equation*}

\noindent The household transmission force can be derived from the drift term of~$s_k$, i.\,e., $\mu_k^S$ as displayed in Equation~\eqref{SDE_terms}, when restricting the contact rate~$c_{kj}$ from Equation~\eqref{formula_ckj} to the household contribution. It results as:
\begin{equation*}
    \alpha s_k \Big[\sum_{l=1}^L z_{k,l}h_I^lH_{l}\Big]\Big[\sum_{l=1}^L z_{k,l}(1-h_I^l)^{-1}\Big]\frac{1}{N_H^k}\sum_{j=1}^K c_{kj}^hi_j.
\end{equation*}
To quantify the relative importance of each pathway, we aggregate the forces across all subpopulations. At time $t$, the total public transmission force is:

$$F^p(t) = \sum_{k=1}^K\sum_{j=1}^K \alpha s_k(t) c_{kj}^p i_j(t),$$
and the total household transmission force is:

$$F^h(t) = \sum_{k=1}^K\sum_{j=1}^K \alpha s_k(t) \Bigl[\sum_{l=1}^L z_{k,l}h_I^l(t)H_{l}\Bigr]\Bigl[\sum_{l=1}^L z_{k,l}(1-h_I^l(t))^{-1}\Bigr]\frac{1}{N_H^k}c^h_{kj}i_j(t).$$
Let $t^*$ denote the peak time where $F^p(t) + F^h(t)$ reaches its maximum.
The relative contributions of each pathway are calculated as:
\begin{equation*}
    \frac{F^p(t^*)}{F^p(t^*) + F^h(t^*)} \qquad\text{ (public contribution)}
\end{equation*}
and
\begin{equation*}
    \frac{F^h(t^*)}{F^p(t^*) + F^h(t^*)}  \qquad\text{ (household contribution).}
\end{equation*}

\begin{figure}[hbtp]
    \centering
    \includegraphics[width=1\linewidth]{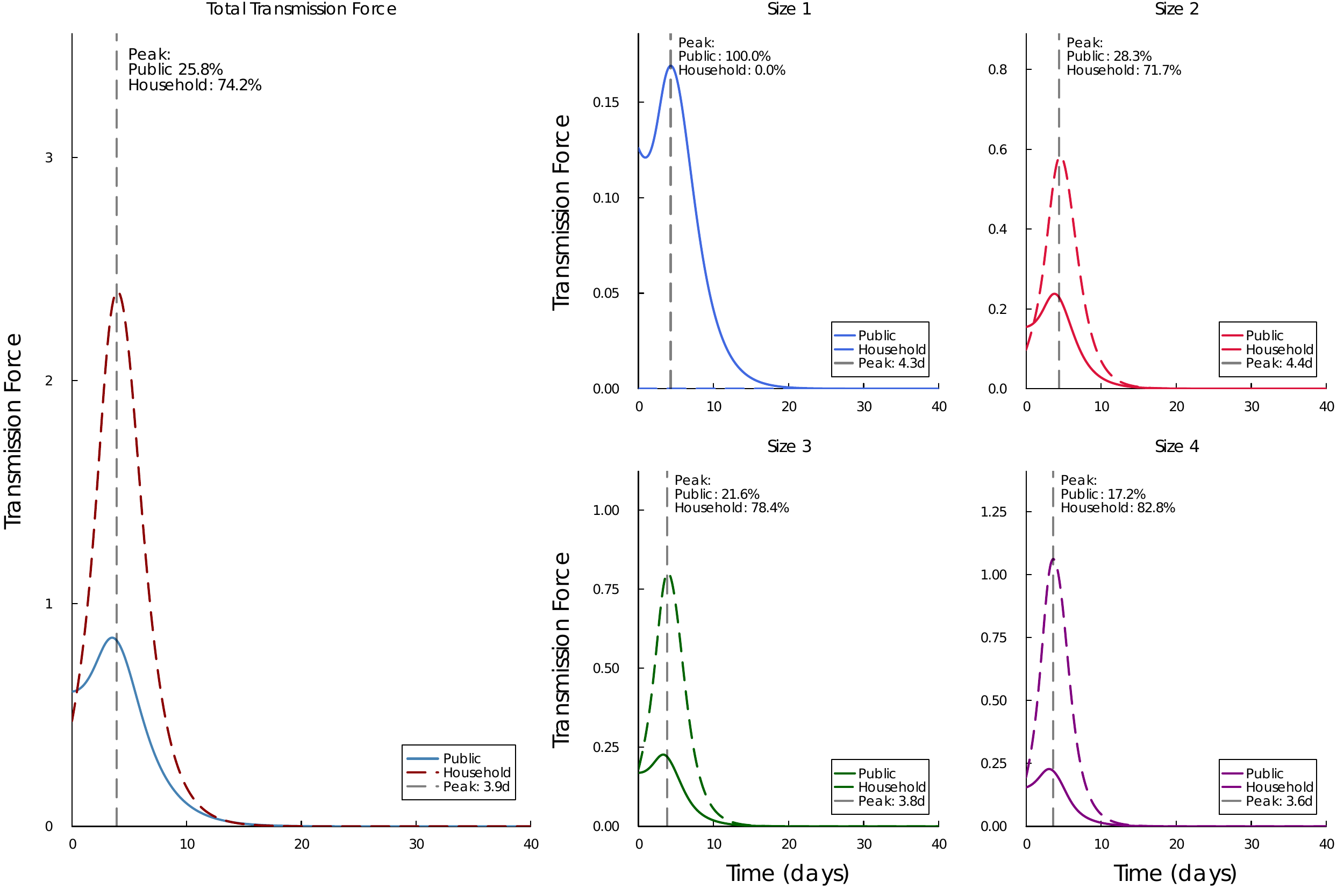}
    \caption{Decomposition of public ($F^p$, solid) and household ($F^h$, dashed) transmission forces over time (final size 80\%). Left: total transmission across all subpopulations. Right: transmission dynamics by household size. Vertical dashed lines mark epidemic peaks. Percentages show relative contributions at peak, demonstrating that household transmission dominates in larger households while public transmission prevails in smaller households.}
    \label{fig:transmission_pathways}
\end{figure}

Decomposition of public ($F^p$, solid) and household ($F^h$, dashed) transmission forces over time (final size 80\%). Left: total transmission across all subpopulations. Right: transmission dynamics by household size. Vertical dashed lines mark epidemic peaks. Percentages show relative contributions at peak, demonstrating that household transmission dominates in larger households while public transmission prevails in smaller households.

Our analysis reveals distinct transmission patterns across household sizes (see Figure \ref{fig:transmission_pathways}). In the total population, transmission force peaks early (at day 3.9), with household transmission accounting for 74.2\% and public transmission 25.8\% of the total force at peak time. However, this aggregate view masks heterogeneity across household sizes. For single-person households (size 1), transmission is naturally entirely through public contacts (100\% of force). In contrast, larger households show predominant household transmission, highlighting the increasing importance of within-household spread as household size grows.

\subsection{Effect of the Public Policy on the Exposure Intensity of the Households}\label{sec:effect_public_policy}
We now examine how public policy interventions affect the dynamics of our stochastic epidemic model, an important element of mathematical epidemiology. Specifically, we analyze how policy-driven changes in transmission patterns influence household exposure risks in the presence of inherent stochasticity in disease transmission processes and variability in policy implementation. 

To quantify how the proportion of infected households affects the exposure risk for susceptible households, we use the size-specific exposure intensity function introduced in \cite{Bayham2016}. For households from size category $l$, this function is defined as:
\begin{equation*}
m_l(t) = \frac{1}{1 -h_{I,l}(t)},
\end{equation*}
where $h_{I,l}(t)$ represents the proportion of infected households from size category~$l$ at time $t$. This function serves as an amplification factor in our model as already introduced in the context of Equation~\eqref{contact_intensity_multiplier}, capturing the increased risk of infection for susceptible households as more households of the same size become infected: When few households are infected ($h_{I,l}(t)$ close to zero), the exposure intensity is near one. As the proportion of infected households increases, the exposure intensity grows nonlinearly, reflecting heightened transmission risk when infections become widespread in the community. This formulation enables us to assess how variations in household size influence the exposure dynamics over time and evaluate the effectiveness of size-specific public health interventions.

We aim to examine the effects of varying the public policy parameters~$\omega_k$ for~$k=1,\dots,K$ on the exposure intensity across households of different sizes. To achieve this, we explore two distinct scenarios: a stringent lockdown of households of size 4 (i.\,e.\  $\omega_{13}=\ldots=\omega_{16}=0.1$, else~$\omega_k=1$), and an unrestricted scenario ($\omega_k=1$ for all~$k=1,\ldots,16$). While this distinction between sizes~3 and~4 is implausible in real life, it serves as a model of realistic interventions in larger shared households. The framework allows for targeted interventions, enabling the differentiation of public policy measures among specific subpopulations.

\begin{figure}[H]
    \centering
    \includegraphics[width=\textwidth]{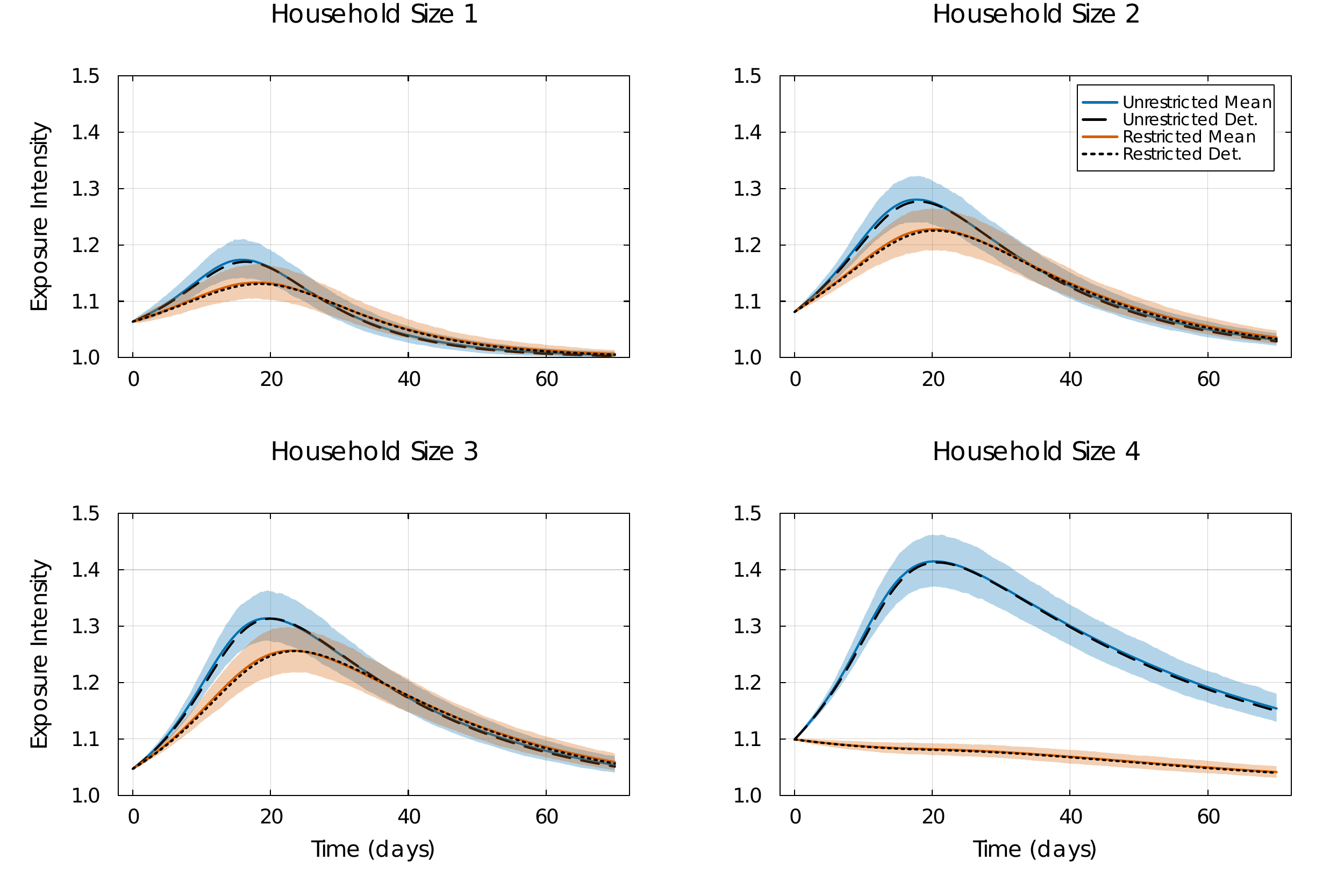}
    \caption{Exposure intensity dynamics for unrestricted (no lockdown) and restricted (partial lockdown) scenarios
    for households of different sizes with 95\% ranges (shaded areas). The restricted scenario implements 
    public contact reduction ($\omega_k=0.1$) for individuals in size-4 households.}
    \label{fig:exposure_intensity}
\end{figure}

The analysis of exposure intensity dynamics reveals both direct and indirect effects of public contact restrictions across household sizes. While restrictions were specifically applied to households of one size category (size~4), their impact propagates throughout the system, resulting in reduced peak exposure intensities across all household categories. For households of size 1, where individuals live alone and thus have no household contacts, the exposure intensity is solely determined by public interactions, explaining their distinctly lower exposure intensity compared to other household sizes. Larger households (sizes 2 and 3) exhibit progressively higher peak exposure intensities in both scenarios compared to households of size $1$, reflecting the combined effect of both household and public transmission pathways. The temporal evolution follows a consistent pattern across household sizes~2 and ~3: an initial increase leading to a peak around day~16 (unrestricted scenario) or day 20 (restricted scenario), followed by a gradual decline. Households of size~4, which are directly targeted by the public contact restrictions, display the most pronounced difference between the two scenarios. In the unrestricted case, these households reach the highest peak exposure intensity among all household sizes, while under restrictions, their peak is substantially reduced and delayed, demonstrating the direct effectiveness of the intervention on the targeted group. Notably, the restricted scenario not only reduces peak intensities but also delays their timing, with the effect being more pronounced in larger households. The stochastic simulations, represented by the mean (solid lines) and 95\% ranges (shaded areas), closely align with the deterministic solutions (black lines), particularly in smaller households. However, larger households display wider 95\% ranges towards the end, indicating greater variability in their transmission dynamics. This comprehensive analysis demonstrates that implementing restrictions on public contacts in larger households can have system-wide effects on exposure intensity, highlighting the interconnected nature of disease transmission across different household structures.

\subsection{Global Sensitivity Analysis of the Threshold Parameter}\label{sec:GSA}

From the approximation of the underlying stochastic process by branching processes near the disease-free equilibrium, we derived the basic reproduction number $\mathcal{R}_0$ in Section~\ref{sec:branching_process_approx}. The outbreak of the disease is determined by $\mathcal{R}_0$ as outlines in Section~\ref{sec:branching_process_approx}. 
Given the expression of $\mathcal{R}_0$ and its critical role in determining the system's behavior, it is essential to understand which parameters most significantly influence its value.

To identify the most critical parameters influencing $\mathcal{R}_0$ and consequently influencing policy interventions, we employ Sobol's variance-based sensitivity analysis \citep{sobol2001global}, a prominent global sensitivity analysis (GSA) method used to study how uncertainty in a model's output can be apportioned to its different input variables. This method quantifies the relative importance of input parameters by decomposing the output variance into contributions from individual parameters and their interactions. We focus on two specific measures: first-order Sobol indices $S_1$, which capture the direct effect of a parameter, and total Sobol indices $S_{T}$, which measure its total effect including all interactions. Both indices range from zero to one, where a value close to one indicates strong parameter influence on the output, while a value close to zero indicates negligible influence. The difference between~$S_1$ and~$S_{T}$ reveals the importance of a parameter's interactions with other variables. For this study, we apply the GSA to a policy-relevant binary outcome: the probability of an epidemic occurring (i.\,e., the probability of~$\mathcal{R}_0 > 1$). The analysis is specifically focused on the policy parameters $\omega_k$, one for each subpopulation~$k=2,\ldots,16$, while all other model parameters are held constant as defined earlier. (Subpopulation~1, i.\,e., children in single households, is assumed empty.)

The sensitivity of the epidemic threshold to each policy parameter was quantified using first-order $S_1$ and total-effect $S_T$ Sobol indices, calculated from evaluations of the expression of $\mathcal{R}_0$ across $10^5$ combinations of parameters. The results, summarized in Figure~\ref{fig:sobol_indices}, reveal that the model's outcome is overwhelmingly dominated by the policy parameter corresponding to a single subpopulation: the policy applied to seniors (56+) in households of size 4 has a first-order index~$S_1$ of~$0.349$, which is an order of magnitude larger than any other parameter. This indicates that the direct effect of this single policy accounts for approximately $35\%$ of the total variance in whether an epidemic occurs.

Furthermore, the total-effect indices and the summary of interactions highlight the pivotal role of this particular subpopulation: the $S_T$ index for its policy parameter is~$0.823$, indicating that this parameter is involved in nearly all influential interactions. This suggests that the effectiveness of policies on other subpopulations is largely dependent on their interaction with this single, highly influential group.

\begin{figure}[H]
    \centerline{
    \includegraphics[width=1.1\textwidth]{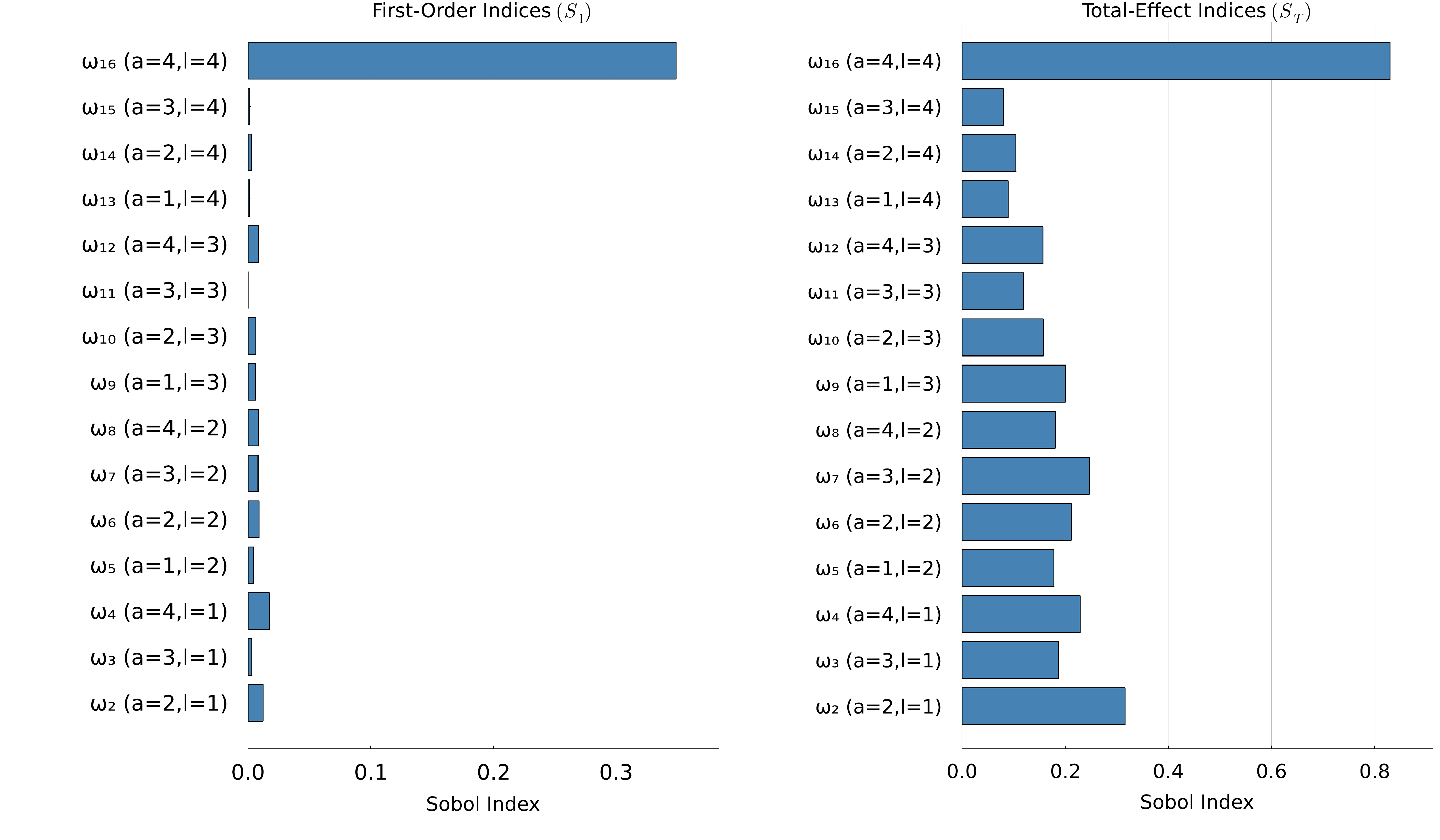}}
    \caption{First-order ($S_1$, left) and total-effect ($S_T$, right) Sobol indices for the policy parameters $\omega_2,\ldots,\omega_{16}$ on the probability that $\mathcal{R}_0 > 1$. The policy targeting seniors (age category~$a=4$) in households of size 4 (size category~$l=4$), i.\,e.\, subpobulation~16,  is shown to be the dominant factor.}
    \label{fig:sobol_indices}
\end{figure}

The affected subpopulation~16 stands out from the others in that it exhibits the highest activity level~$a_{16}^p$ (see Table~\ref{tab:init}), followed by subpopulation~2. To investigate the cause of high sensitivity further, we related a subpopulation's public activity level~$a_k^p$ to the total-effect index~$S_T$ of its corresponding policy parameter~$\omega_k$. As illustrated in Figure~\ref{fig:correlation}, we found a positive correlation between these two variables, with a calculated Pearson correlation coefficient of $\rho \approx 0.843$.

\begin{figure}[H]
    \centering
    \includegraphics[width=0.9\textwidth]{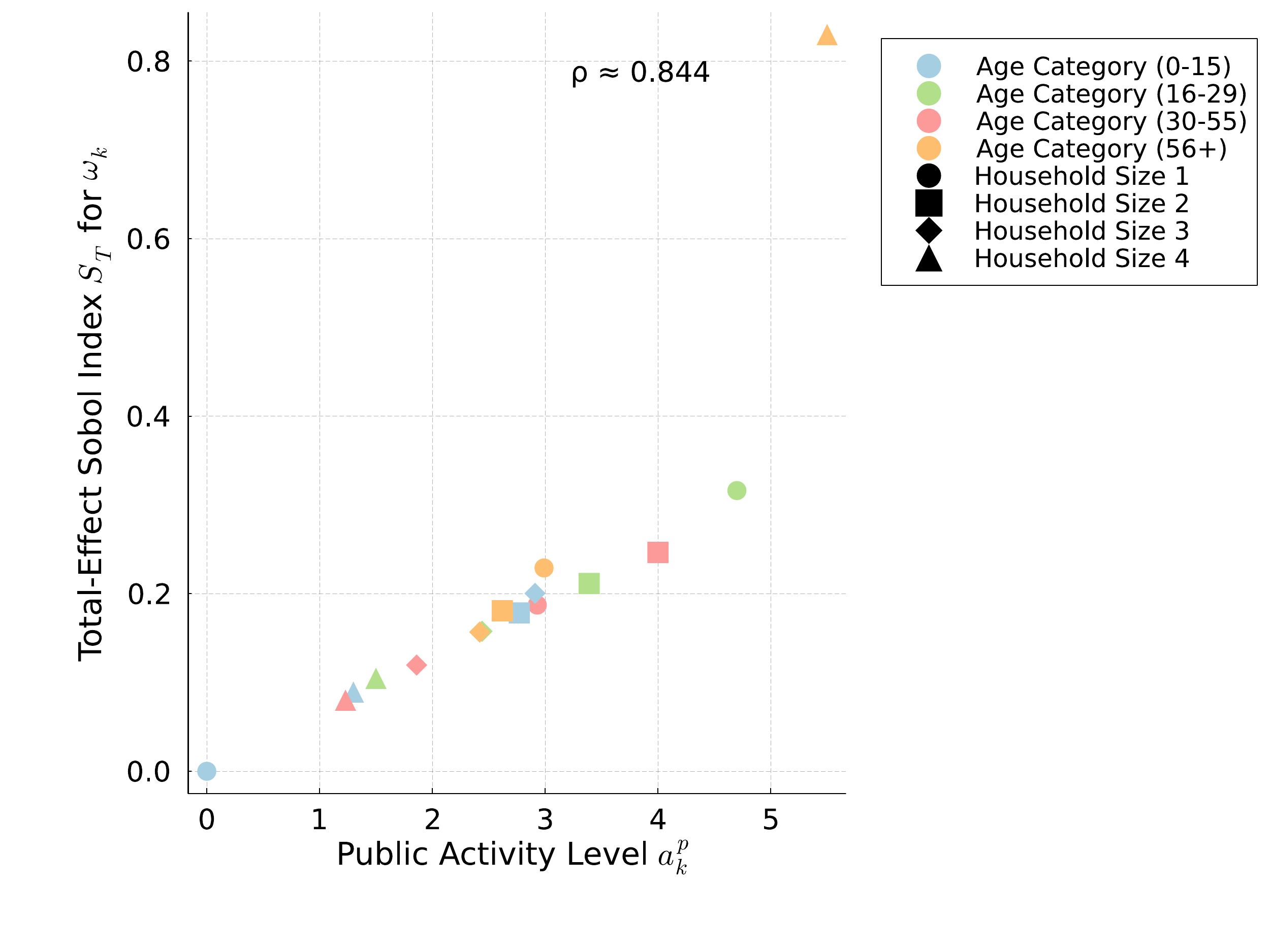}
    \caption{Relation between the public activity level $a^p_k$ of each of the 15 subpopulations and the total-effect Sobol index ($S_T$) of its corresponding policy parameter $\omega_k$. Each point represents a subpopulation, colored by age category and shaped by household size. The Pearson correlation coefficient equals~$\rho\approx 0.844$.}
    \label{fig:correlation}
\end{figure}

The GSA suggests that an epidemic in this model is not driven uniformly by all subpopulations but is instead contingent on the behavior of a single "linchpin" group. The policy targeting that subpopulation was identified as the critical intervention point, a direct result of this group having the highest public activity level in the model's parameterization.

The strong correlation between a group's activity level and its policy importance confirms that a parameter's sensitivity is not arbitrary but is a direct consequence of its structural role in the model's contact dynamics. This quantitatively validates the intuition that restricting the most socially active groups yields the greatest impact on disease transmission.

From a public health perspective, these results advocate for a targeted, data-driven intervention strategy over broad, undifferentiated lockdowns. The analysis suggests that identifying and focusing resources and restrictions on the subpopulation with the highest public contact rates would be the most efficient and effective means of controlling an outbreak. By mitigating transmission within this core group, the efficacy of all other transmission pathways, which are shown to be highly dependent on interactions with this group, is also reduced. This modeling outcome underscores the critical need for real-world behavioral data to inform and prioritize public health policy.

\subsection{Effect of Stochasticity}\label{sec: stoch_effect}

The integration of household dynamics into stochastic models, as described in Section~\ref{sec: model}, represents an innovation of our work compared to more common deterministic models. In the following, we examine the effect of stochasticity in our household-structured epidemic model by comparing simulations from SDE and ODE models. The SDE model has been described by Equation~\eqref{sdeh} with drift and diffusion terms given by Equation~\eqref{SDE_terms}. For the ODE model, we set all diffusion terms to zero. Our comparison explores how incorporating stochastic terms affects not only the epidemic trajectories but also the timing of peak infections across different household sizes and age categories. To this end, we simulate 1000 independent trajectories from the SDE model and explore pointwise means and variances over time. For the ODE model, there is only one simulation, with no measurement error assumed.

Figure \ref{fig:infected_age_dynamics} displays simulated proportions~$i_k$ of infected individuals in the subpopulations~$k=2,\ldots,16$. It shows that the deterministic behaviour of the ODE model and the mean behaviour of the SDE model are similar within each subpopulation; the SDE model, however, captures natural fluctuations around the mean trajectories, shown by the shaded empirical 95\% ranges around the solid lines. These stochastic effects are particularly pronounced during peak infection periods around, i.\,e., between day~15 and~20. 

\begin{figure}[H]
    \centering   
    \includegraphics[width=\textwidth]{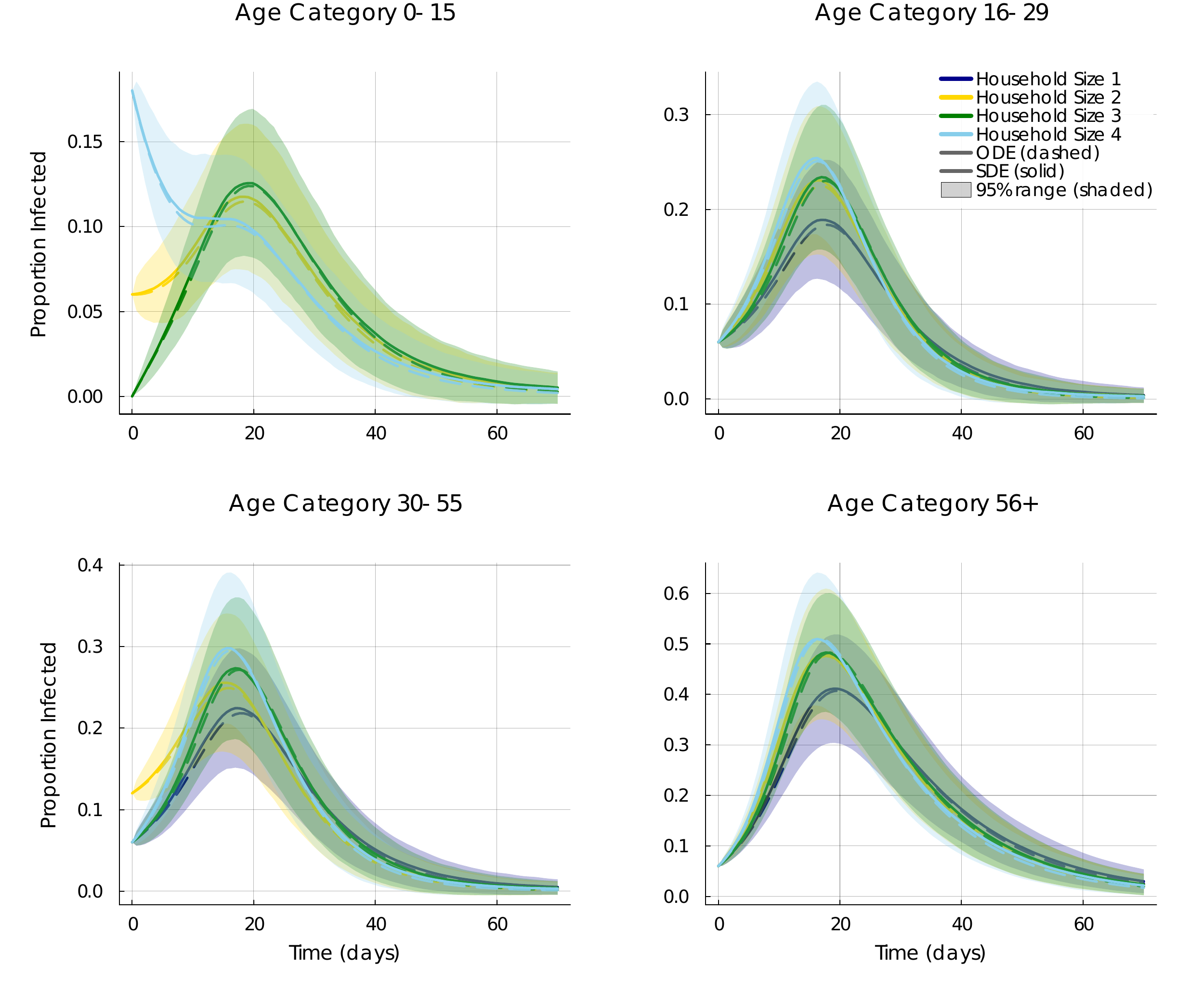}
    \caption{Temporal dynamics of infected proportions~$i_k$ in subpopulations~$k=2,\ldots,16$. Each subplot represents a different age category and displays courses for different household size categories. The plots show deterministic (dashed lines) and stochastic (solid lines) solutions. The shaded regions around the solid lines indicate empirical pointwise 95\% ranges around the mean of SDE trajectories across 1000 simulations, capturing the inherent variability in the infection dynamics.}
    \label{fig:infected_age_dynamics}
\end{figure}

Moving the focus from the individual-level to the household-level dynamics~$h_{S,l}$ and~$h_{I,l}$, $l=1,\ldots,4$, leads to a similar observation pattern: as visible from Figure \ref{fig:household_dynamics}, the deterministic ODE solutions are close to the means of stochastic trajectories for both susceptible and infected household proportions; the variability of the SDE model is expressed through empirical 95\% ranges.

\begin{figure}[H]
    \centering    
    \includegraphics[width=\textwidth]{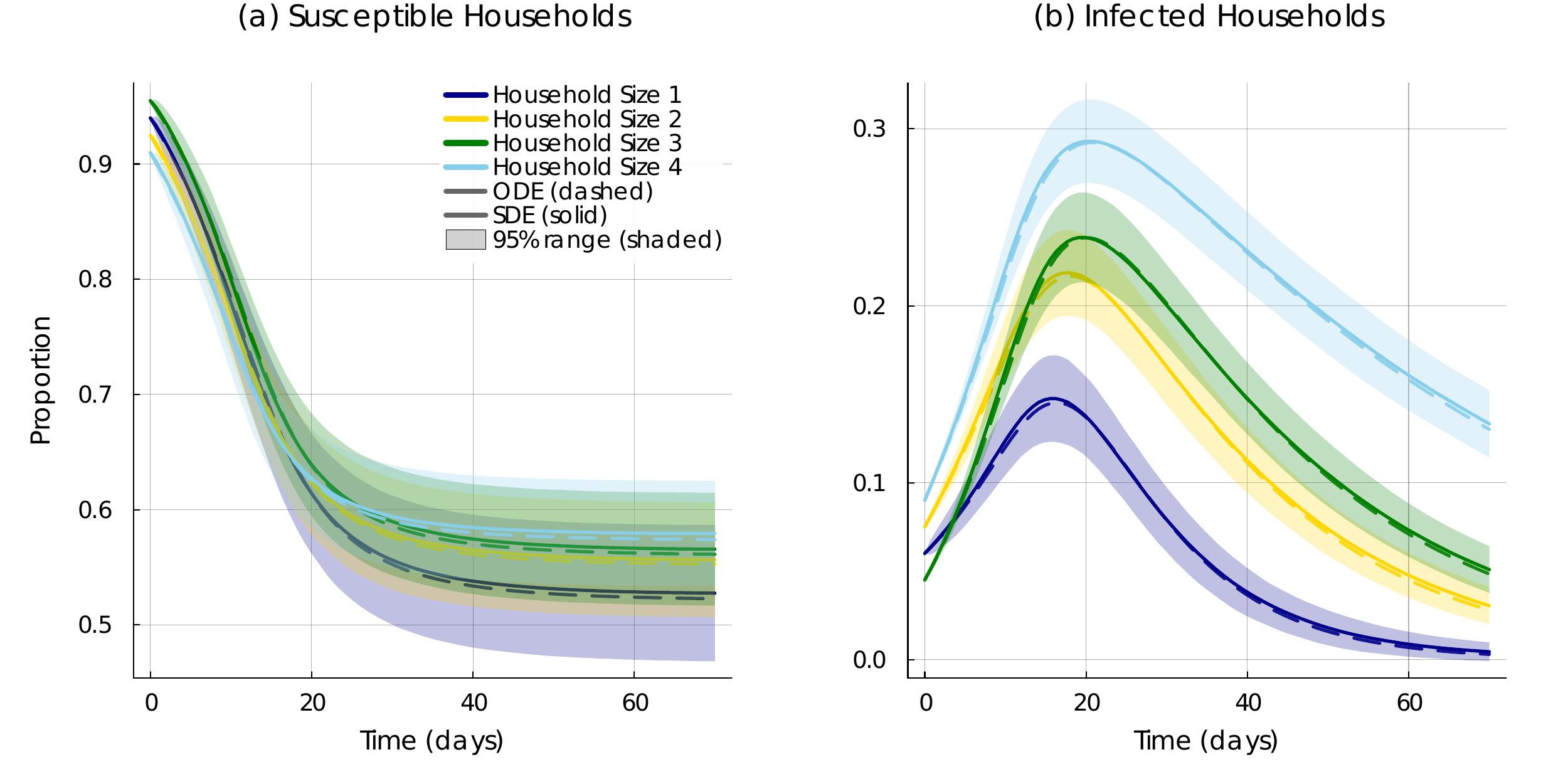}
    \caption{Temporal dynamics of household states~$h_{S,l}$ (left) and~$h_{I,l}$ (right), $l=1,\ldots,4$. The plots show deterministic (dashed lines) and stochastic (solid lines) solutions. The shaded regions around the solid lines indicate empirical pointwise 95\% ranges around the mean of SDE trajectories across 1000 simulations.}
    \label{fig:household_dynamics}
\end{figure}

To quantify the magnitude of stochastic effects which became visible in Figures~\ref{fig:infected_age_dynamics} and~\ref{fig:household_dynamics}, we consider the coefficient of variation (CV). For each time point~$t$, it is calculated as $\text{CV}(t) = s(t)/m(t)$, where $s(t)$ is the empirical standard deviation and $m(t)$ the empirical mean across the~1000 stochastic realizations at time~$t$. This time-dependent CV provides a standardized measure of the relative variability in our ensemble of trajectories, enabling us to quantify how stochastic effects vary throughout the epidemic progression and compare variability across different subpopulations.

Figure \ref{fig:cv_analysis} reveals patterns in the relative variability of the epidemic dynamics. At the individual level (top panel), CV values cluster primarily by household size. Households of size~4 (light blue) exhibit the highest CV values, reaching approximately~1.2, while smaller household sizes display progressively lower variability. Within each household size, there is some age-specific variation in CV patterns during the epidemic peak (days 15-25). Notably, one curve shows a substantially delayed onset (beginning around day 10-15), corresponding to a subpopulation with zero initial infections where variability only emerges once infections spread to that group.

At the household level, contrasting patterns emerge between susceptible (middle panel) and infected (bottom panel) households. For susceptible households, all sizes maintain relatively low CV values (below 0.05), with single-person households (dark blue) showing slightly higher variability than larger households. For infected households, the pattern is more pronounced: single-person households exhibit substantially higher CV values (reaching approximately 0.6) compared to larger households, which maintain CV values below 0.2. These contrasting patterns highlight the practical value of incorporating stochasticity into household-structured epidemic models. The SDE framework reveals that epidemic uncertainty differs fundamentally by scale: larger households (size~4) show high variability in infection proportions (top panel) due to amplified within-household transmission, while single-person households show high variability in infection status (bottom panel) due to dependence on external contacts. Capturing this scale-dependent variability—which deterministic models miss entirely—enables more realistic uncertainty quantification for epidemic forecasting and intervention planning.

\begin{figure}[tbhp]
    \centering
    \includegraphics[width=\textwidth]{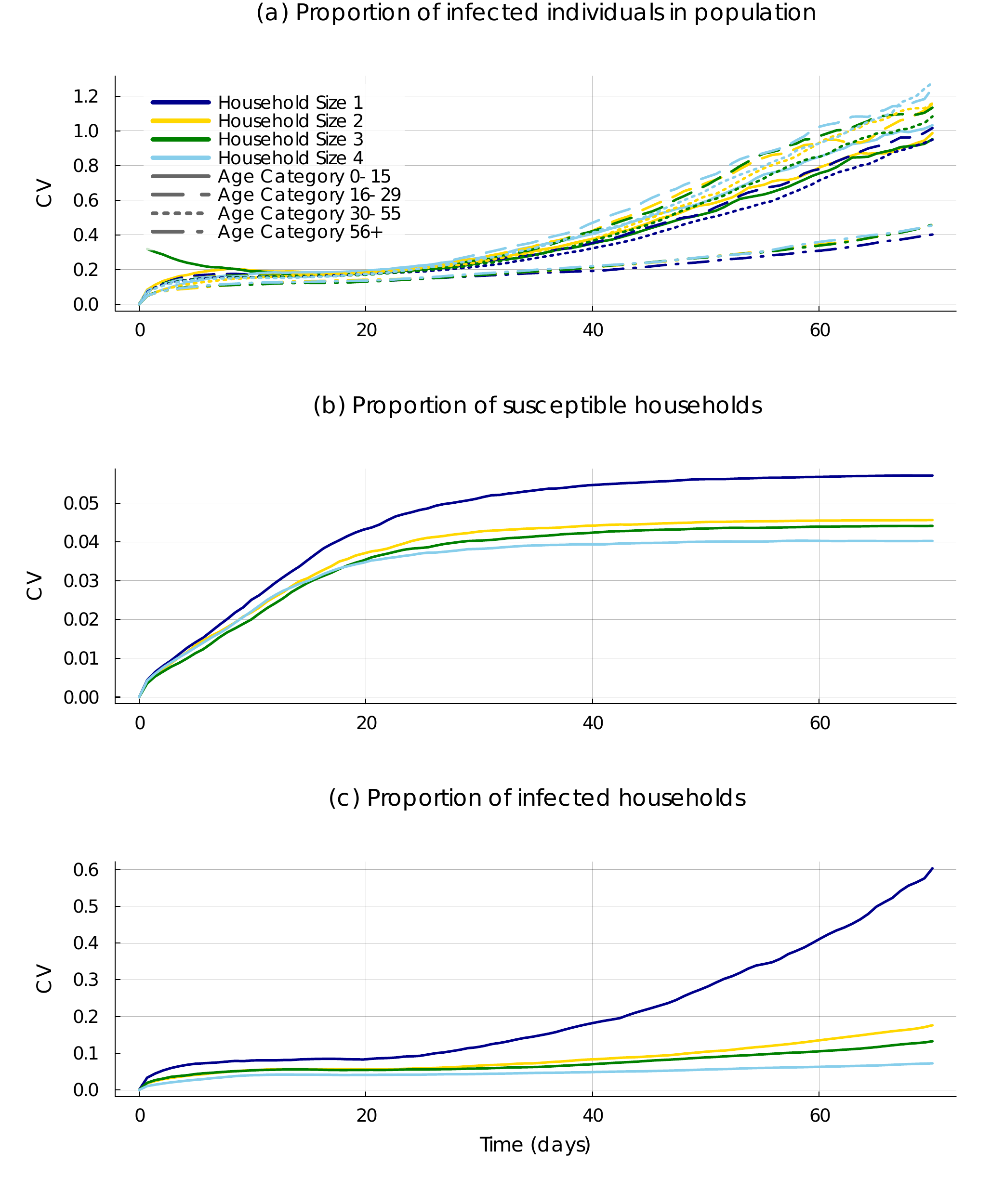}
    \caption{Coefficient of variation (CV) analysis showing the relative variability in the system, based on 1000 simulations: (Top) CV for proportion of infected individuals~$i_k$ across subpopulations. The curves apparently cluster by household size. (Middle) CV for porportion of susceptible households~$h_{S,l}$ and (Bottom) infected households~$h_{I,l}$.
    }
    \label{fig:cv_analysis}
\end{figure}
Figure \ref{fig:peak_time_analysis} examines the distribution of epidemic peak timing across subpopulations. The total population exhibits a relatively narrow timing distribution centered at day~17.1, compared to the wider distributions observed in individual subpopulations, demonstrating that population-level averages mask substantial temporal heterogeneity that only stochastic models can capture. Across most household sizes, the youngest age category (0-15) displays notably higher variability in peak timing compared to older age groups, whose distributions are more concentrated, revealing that stochastic effects are not uniform but depend critically on demographic characteristics. The (0-15) age category in size~4 households represents an exceptional case with a mean near-zero peak time ($m = 0.2$ days): the high initial infection proportion in this subpopulation causes infections to decline from $t=0$, meaning the initial condition itself represents the peak rather than an epidemic-driven peak—a phenomenon that highlights how stochasticity interacts with initial conditions to create qualitatively different epidemic trajectories. For other subpopulations in the (0-15) age category, the greater variability in peak timing may be related to their higher recovery rate ($\beta = 0.2$, see Table~\ref{tab:init}) combined with their lower contact rates and more heterogeneous initial conditions (including zero initial infections in some subpopulations), which together make their infection dynamics more dependent on stochastic transmission from other age groups. In contrast, the 56+ age category, with the lowest recovery rate ($\beta = 0.071$), shows more predictable infection accumulation and consequently more consistent peak timing, illustrating that certain demographic groups are inherently more susceptible to stochastic variability than others. Collectively, these patterns demonstrate that the SDE framework is essential for capturing the full spectrum of epidemic uncertainty across heterogeneous populations, from timing variability to demographic-specific sensitivities.

\begin{figure}[tbhp]
    \centering
    \includegraphics[width=\textwidth]{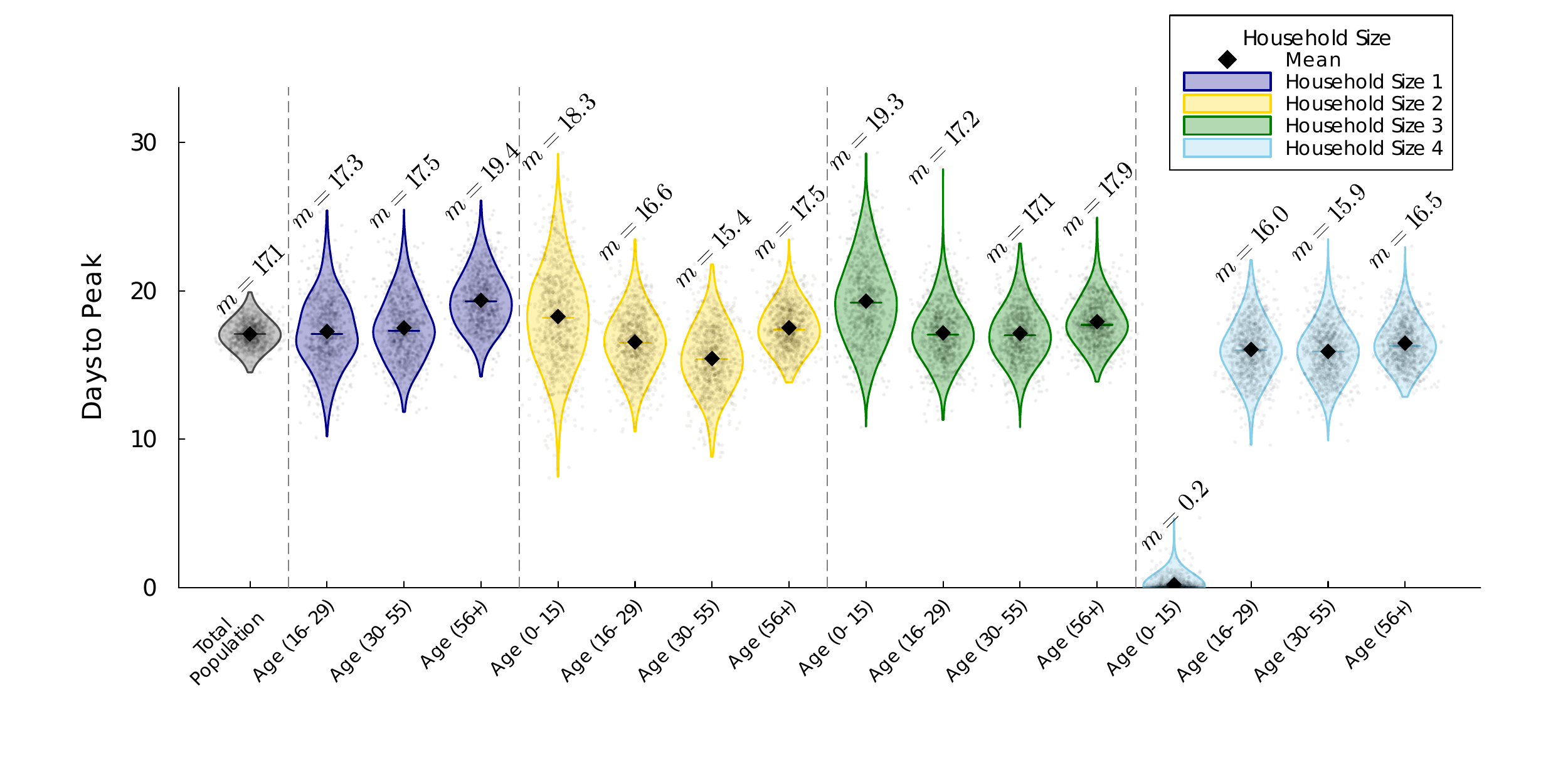}
    \caption{Distribution of time to peak infection for total population and across subpopulations. Violin plots show the empirical distribution of peak timing over 1000 simulations, with width indicating frequency at each time point. Black dots represent individual trajectories, and larger gray points mark the mean ($m$) for each subpopulation, whose value is also shown above each violin. The plot shows the result for the aggregated total population (gray) at the left, followed by subpopulations organized by household size  and age categories within each size group.  
}
    \label{fig:peak_time_analysis}
\end{figure}

\section{Conclusion} \label{sec:conclusion}
We introduced an SDE model that incorporates household structure as a central feature of the SIR framework, distinguishing between within-household and public transmission pathways. This approach aims to address limitations of existing models by capturing the randomness inherent in disease spread while explicitly representing how household composition influences infection dynamics.

A key feature of our model is its flexible approach to contact rates between subpopulations. This flexibility allows for a broader spectrum of social behaviors to be represented, which we believe is crucial for modeling scenarios where social interactions vary significantly across different demographics. The inclusion of a public policy parameter enables us to explore how lockdown measures and restrictions on external contacts might influence disease transmission. Given the recent global experiences with pandemic management, this feature could provide valuable insights for policymakers.

We derived the model's basic reproduction number using branching processes. A sensitivity analysis of this key figure revealed which factors most significantly influence disease spread in our model; that is, the policy parameter for the subpopulation with the highest activity level. These findings could help guide public health interventions, suggesting where efforts might be most effectively focused. Our exploration of quantitative exposure intensity offers a perspective on household vulnerability to infection in our stochastic model, which could be particularly relevant when considering targeted interventions or risk assessments.

The stochastic nature of our model allows for a more comprehensive representation of uncertainty in epidemic trajectories. This feature might be especially informative when modeling smaller populations or the early stages of an outbreak, where random fluctuations can have significant impact. Yet, it is worth noting that this added complexity comes at the cost of increased computational demands and potentially more challenging parameter estimation.

Several directions emerge for future research. Validation against empirical outbreak data would be an essential next step to assess the model's practical utility and comparative performance. Investigating how different population distributions and household structures impact model outputs could reveal important demographic considerations in epidemic spread. Extensions incorporating vital dynamics or waning immunity could enable analysis of endemic disease scenarios, expanding the model's applicability to long-term public health planning.

In conclusion, our SDE-based household-structured SIR model represents a new approach to epidemic modeling, offering new analytical capabilities while building upon established epidemiological frameworks. Through further development and refinement, it has the potential to enhance our understanding of disease transmission dynamics and inform public health strategies.
\section*{Statements and Declarations}

\paragraph{Competing Interests} The authors declare that they have no competing interests.

\paragraph{Funding} Our research was supported by the Helmholtz pilot project ”Uncertainty Quantification" and by the German Research Foundation (DFG) [RTG 2865/1 – 492988838].

\paragraph{Code Availability} The simulation code used in this study is available on GitHub at \citep{Yaqine_Stochastic_Differential_Equations_2025}. 

\paragraph{Acknowledgements}
The authors would like to express their gratitude to Dr. Sebastian Grube for his valuable discussions and insights that contributed to this work.
\begin{appendix}

\section{Appendix: Model Variations}\label{secA}

Section~\ref{sec:comparison_othermodels} numerically explores four different model structures, which are described in detail here.
\subsection{Model U: Unstructured SIR model, A=1, L=1}
\label{secA1}
In this model, there is no stratification by age or household size. The population is considered homogeneously mixed, and we use the standard SIR model without any subdivisions. \\[1ex]
\textbf{Variables}
\begin{itemize}
    \item[] $s(t)$: the proportion of susceptible individuals
    \item[] $i(t)$: the proportion of infected individuals
    \item[] $r(t)$: the proportion of recovered individuals
\end{itemize}
\textbf{Parameters}
\begin{itemize}
    \item[] $\alpha > 0$: the transmission rate
    \item[] $\beta > 0$: the recovery rate
\end{itemize}
\textbf{Equations}
\begin{itemize}
    \item[] Susceptible individuals: 
    \begin{equation*}
        ds(t)=-\alpha s(t) i(t) dt
    \end{equation*}
    \item[] Infected individuals:
    \begin{equation*}
        di(t)=\Big( \alpha s(t)i(t)-\beta i(t)\Big)dt
    \end{equation*}
    \item[] Recovered individuals:
    \begin{equation*}
        dr(t)=\beta i(t)dt
    \end{equation*}
\end{itemize}

\subsection{Model A: SIR model with stratification by age, A=4, L=1}\label{secA2}
In this model, the population is divided into four age categories. There is no stratification by household size. We assume that individuals mix uniformly within their age category.\\[1ex]
\textbf{Variables}
For each age category $k=1,\ldots,4$:
\begin{itemize}
    \item[] $s_k(t)$: the proportion of susceptible individuals in age category $k$
    \item[] $i_k(t)$: the proportion of infected individuals in age category $k$
    \item[] $r_k(t)$: the proportion of recovered individuals in age category $k$
\end{itemize}
\textbf{Parameters}\nopagebreak
\begin{itemize}
    \item[] $\alpha> 0$: the transmission rate
    \item[] $\beta_k> 0$: the recovery rate for age category $k$
    \item[] $c_{kj}> 0$: the contact rate between age categories $k$ and $j$   
\end{itemize}
\textbf{Equations}
\begin{itemize}
    \item[] Susceptible individuals:
    \begin{equation*}
        ds_k(t)=\Big(-\alpha s_k(t)\sum_{j=1}^4 c_{kj}i_j(t)\Big)dt
    \end{equation*}
    \item[]Infected individuals:
    \begin{equation*}
        di_k(t)=\Big(\alpha s_k(t)\sum_{j=1}^4 c_{kj}i_j(t)-\beta_ki_k(t)\Big)dt
    \end{equation*}
    \item[]Recovered individuals:
    \begin{equation*}
        dr_k(t)=\beta_k i_k(t)dt
    \end{equation*}
\end{itemize}
Susceptible individuals in age category $k$ become infected through contact with infected individuals from all age categories~$j=1,\ldots,4$, weighted by the contact rates $c_{kj}$. In this model, we have the total population in each age category $s_k(t)+i_k(t)+r_k(t)=1$ for all $t\geq 0$ and for each $k$.

\subsection{Model H: SIR model with stratification by household size, A=1, L=4}\label{secA3}
In this model, the population is divided based on household size categories. Age is not a differentiating factor. We assume here a homogeneous mixing within households sizes.\\[1ex]
    \textbf{Variables}
    For each household size $l=1,\ldots,4$:
         \begin{itemize}
        \item[] $s_l(t)$: the proportion of susceptible individuals in household size category $l$
        \item[]$i_l(t)$: the proportion of infected individuals in household size category $l$
        \item[]$r_l(t)$: the proportion of recovered individuals in household size category $l$
    \end{itemize}
   \begin{itemize}
       \item[]$h_{S,l}(t)$: the proportion of susceptible households of size category $l$
       \item[]$h_{I,l}(t)$: the proportion of infected households of size category $l$
       \item[]$h_{r,l}(t)$: the proportion of recovered households of size category $l$
   \end{itemize}
   \textbf{Parameters}
   \begin{itemize}
       \item[]$\alpha> 0$: the transmission rate
       \item[]$\beta> 0$: the recovery rate for individuals
       \item[]$\nu_l> 0$: the recovery rate for households of size category $l$
       \item[]$c_{lj}^p\geq 0$: the contact rate between individuals of household size category $l$ and $j$ in the public
       \item[]$c_{lj}^h\geq 0$: the contact rate between individuals of household size category $l$ and $j$ within households
       \item[]$N_l$: the total number of individuals in households of size category $l$
       \item[]$H_{l}$: the total number of households of size category $l$
       \item[]$N=\sum_{l=1}^4 N_l$: the total population size
       \item[]$H=\sum_{l=1}^4H_{l}$: the total number of households
   \end{itemize}
   \textbf{Equations} 
   \begin{itemize}
       \item[] Susceptible individuals
       \begin{equation*}
           ds_l(t)=\Big( \alpha s_l(t)\Big(  \sum_{j=1}^4 c_{lj}^p i_j(t)+H_{l} h_{I,l}\frac{c_{lj}^h}{N_{H,j}}\frac{h_{I,l}(t)}{1-h_{I,l}(t)}i_l(t)\Big)\Big) dt
       \end{equation*}
       \item[] Infected individuals
       \begin{equation*}
           di_l(t)=\left[\alpha s_l(t)  \sum_{j=1}^4 \Big(c_{lj}^p + \frac{h_{I,l}H_{l}}{N_{H,j}}\frac{h_{I,l}(t)}{1-h_{I,l}(t)}c_{lj}^h\Big)i_j(t)-\beta i_l(t)
           \right]dt
       \end{equation*}
       \item[]Recovered individuals
       \begin{equation*}
           dr_l(t)=\beta i_l(t)dt
       \end{equation*}
       \item[]Susceptible households
       \begin{equation*}
           dh_{S,l}(t)=\Big( -\alpha \frac{h_{S,l}}{H_{l}}(t)N_ls_l(t)\sum_{j=1}^4 c_{lj}^p i_l(t)\Big)dt
       \end{equation*}
       \item[]Infected households
       \begin{equation*}
           dh_{I,l}(t)=\left[  \alpha \frac{h_{S,l}}{H_{l}}(t)N_ls_l(t)\sum_{j=1}^4 c_{lj}^p i_l(t)-\nu_l h_{I,l}(t)
           \right]dt
        \end{equation*}
        \item[]Recovered households 
        \begin{equation*}
             dh_{R,l}(t)=\nu_l h_{I,l}(t)dt
        \end{equation*}     
   \end{itemize}
    Here, susceptible individuals in household size category $l$ can become infected through: \begin{itemize}
        \item[] Public transmission: contact with infected individuals from any household size $\left( \sum_{j=1}^4 c_{lj}^p i_j(t)\right)$.
        \item[] Household transmission: contact within their own household (of the same size), adjusted by the proportion of infected households $h_{I,l}(t)/(1-h_{I,l}(t))$.
    \end{itemize}

\section{Existence and Uniqueness of a Solution}\label{Appendix C: Existence}

In this section, we provide a proof of the existence and uniqueness of a global, non-negative solution to the system of SDEs defined in Equation~\eqref{sdeh}. The system can be written in the compact matrix form as:
\begin{equation*}
    d\bm{y}(t) = \bm{\mu}(\bm{y}(t)) dt + \bm{\sigma}(\bm{y}(t)) d\bm{B}(t), \quad \bm{y}(t_0) = \bm{y}_0,
\end{equation*}
where $\bm{\mu}:\mathbb{R}^d\to\mathbb{R}^d$ and~$\bm{\sigma}:\mathbb{R}^d\to \mathbb{R}^{d\times d}$ with $d=2K+2L$ contain the components from~\eqref{SDE_terms}.

\noindent The standard theorems for establishing the existence and uniqueness of solutions to SDEs is provided by \citep[Theorem 5.2.1]{Oksendal2013}. This theorem guarantees a unique strong solution for $t\in[0,T]$ provided the coefficients satisfy two conditions: 
\begin{itemize}
    \item \textit{Global Lipschitz Continuity:} There exists a constant $K$ such that for all $\bm{y}(t), \bm{y'}(t) \in \mathbb{R}^{d}$ and $t\in[t_0,T]$ for $t_0\geq 0\text{ and }T>t_0$:
    \begin{equation}\label{eq:global_lip}
        \|\bm{\mu}(\bm{y}(t)) - \bm{\mu}(\bm{y'}(t))\| + \|\bm{\sigma}(\bm{y}(t)) - \bm{\sigma}(\bm{y'}(t))\| \leq K \|\bm{y}(t) - \bm{y'}(t)\|.
    \end{equation}
    \item \textit{Linear Growth:} There exists a constant $C$ such that for all $\bm{y}(t) \in \mathbb{R}^{d}$:
    \begin{equation*}
        \|\bm{\mu}(\bm{y}(t))\| + \|\bm{\sigma}(\bm{y}(t))\| \leq C(1 + \|\bm{y}(t)\|).
    \end{equation*}
\end{itemize}

\noindent For our specific epidemiological model, the state variables represent proportions. Thus, the physically relevant domain is the subset of $\mathbb{R}^{d}$ defined by simplex constraints for each subpopulation and each household size:
\begin{equation*}
    \begin{aligned}
    \mathcal{D} = \Big\{ \bm{y} \in \mathbb{R}^{d} : &\ 0 \le s_k, i_k, h_{S,l}, h_{I,l} \le 1; \ s_k+i_k \le 1; \ h_{S,l}+h_{I,l} \le 1 \\
    &\ \forall k \in \{1,\ldots,K\}, \ \forall l \in \{1,\ldots,L\} \Big\}.
    \end{aligned}
\end{equation*}
We assume the initial condition $\bm{y}_0 \in \text{int}(\mathcal{D})$, where $\text{int}(\mathcal{D})$ denotes the interior of $\mathcal{D}$.

The compactness and convexity of $\mathcal{D}$ fundamentally simplify the analysis. Specifically, the linear growth condition is automatically satisfied because any solution trajectory remaining in the compact set $\mathcal{D}$ is bounded a priori. Furthermore, the quadratic drift terms (e.g., $s_k i_j$), which are not globally Lipschitz continuous on $\mathbb{R}^{d}$, are smooth and bounded on $\mathcal{D}$, and therefore locally Lipschitz continuous on any compact subset of $\mathcal{D}$. However, compactness of $\mathcal{D}$ alone does not guarantee global existence and uniqueness of solutions. The principal mathematical obstacle to applying standard existence and uniqueness theorems is the potential degeneracy of coefficients at the boundary $\partial\mathcal{D}$:

\begin{itemize}
    \item \textit{At the lower boundary ($y=0$)}:
    The diffusion coefficients~\eqref{SDE_terms} contain square root terms
    . The square root function is not locally Lipschitz continuous at zero because its derivative is unbounded in any neighborhood of the origin:
    \[ \lim_{y \to 0^+} \frac{d}{dy}\sqrt{y} = \lim_{y \to 0^+} \frac{1}{2\sqrt{y}} = +\infty. \]
    Consequently, the Lipschitz condition~\eqref{eq:global_lip} fails at the lower boundary $\partial\mathcal{D} \cap \{y_i = 0\}$ for any state variable $y_i$ appearing under a square root.
    
    \item \textit{At the upper boundary (singularity at $h_{I,l}=1$)}:
    The contact rate parameter $c_{kj}$ defined in~\eqref{formula_ckj} contains the term $(1 - h_{I,l})^{-1}$, which tends to infinity 
    as $h_{I,l} \to 1^-$:
    \[ \lim_{h_{I,l} \to 1^-} (1 - h_{I,l})^{-1} = +\infty. \]
    This singularity prevents the drift coefficient from being Lipschitz continuous, or even bounded, in any neighborhood of the boundary point $h_{I,l} = 1$.
\end{itemize}
\noindent To address these boundary issues, we invoke specialized results that weaken the standard Lipschitz and boundedness assumptions. For the non-Lipschitz diffusion coefficient at the lower boundary, we apply the pathwise uniqueness theorem of Yamada and Watanabe.\\

\begin{theorem}[Yamada-Watanabe Condition {\cite[Theorem 6.5]{Mao2008}}]
\label{thm:yamada_watanabe}
    Let $\kappa: \mathbb{R}_+ \to \mathbb{R}_+$ be a continuous, strictly increasing, concave function satisfying $\kappa(0)=0$ and
    \begin{equation} \label{eq:yamada_integral}
        \int_{0^+} \frac{du}{\kappa(u)} = +\infty.
    \end{equation}
    Suppose the drift coefficient $\bm{\mu}$ and diffusion coefficient $\bm{\sigma}$ satisfy
    \begin{equation} \label{eq:yamada_inequality}
        \|\bm{\mu}(\bm{y}) - \bm{\mu}(\bm{y'})\|^2 \vee \|\bm{\sigma}(\bm{y}) - \bm{\sigma}(\bm{y'})\|^2 \le \kappa(\|\bm{y} - \bm{y'}\|^2)
    \end{equation}
    for all $\bm{y}, \bm{y'} \in \mathbb{R}^d$ and all $t \in [t_0, T]$. Then the SDE admits pathwise unique solutions.\\
\end{theorem}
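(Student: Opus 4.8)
The plan is to establish \emph{pathwise uniqueness} by the classical Yamada--Watanabe approximation argument, working throughout with the squared difference of two solutions so as to keep every computation one-dimensional and thereby avoid the non-smoothness of the Euclidean norm at the origin. Suppose $\bm{y}$ and $\bm{y}'$ are two solutions on a common probability space, driven by the same Brownian motion $\bm B$ and started from the same initial datum $\bm y_0$. I would set $\bm Z = \bm y - \bm y'$ and $V(t) = \|\bm Z(t)\|^2$, so that $V(t_0)=0$, and the goal becomes showing $\mathbb{E}[V(t)] = 0$ for every $t\in[t_0,T]$, which forces $\bm Z\equiv 0$ almost surely.

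First I would construct the approximating sequence. Writing $\beta(v) = v\,\kappa(v)$, I note that $\beta$ is continuous and positive on $(0,\infty)$ and that $\int_{0^+} dv/\beta(v) = +\infty$; this is immediate from \eqref{eq:yamada_integral} because $1/v > 1$ near the origin, so $\int_{0^+} dv/(v\kappa(v)) \ge \int_{0^+} dv/\kappa(v) = +\infty$. Using this divergence I pick a strictly decreasing sequence $a_n \downarrow 0$ with $a_0 = 1$ and $\int_{a_n}^{a_{n-1}} dv/\beta(v) = n$, and then continuous functions $\rho_n$ supported in $(a_n, a_{n-1})$ with $0 \le \rho_n(v) \le 2/(n\beta(v))$ and $\int \rho_n = 1$. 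Setting $\phi_n(v) = \int_0^v \!\int_0^y \rho_n(u)\,du\,dy$ yields $C^2$ functions with $0 \le \phi_n' \le 1$, $\phi_n'' = \rho_n$, and $\phi_n(v) \uparrow v$ pointwise, so that $\mathbb{E}[\phi_n(V(t))] \to \mathbb{E}[V(t)]$ by monotone convergence, once a finite a~priori second-moment bound is in hand (supplied by a standard localization/stopping-time argument).

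Next I would apply Itô's formula to $\phi_n(V(t))$. The differential of $V$ has drift $2\bm Z\cdot(\bm\mu(\bm y)-\bm\mu(\bm y')) + \|\bm\sigma(\bm y)-\bm\sigma(\bm y')\|^2$ and quadratic variation $d\langle V\rangle = 4\,\bm Z^{\!\top}(\bm\sigma-\bm\sigma')(\bm\sigma-\bm\sigma')^{\!\top}\bm Z\,dt \le 4V\,\|\bm\sigma-\bm\sigma'\|^2\,dt \le 4V\kappa(V)\,dt$, where the last two bounds use Cauchy--Schwarz and the diffusion half of \eqref{eq:yamada_inequality}. Taking expectations kills the martingale term, and using $|\phi_n'|\le 1$, the drift half of \eqref{eq:yamada_inequality} in the form $\|\bm\mu-\bm\mu'\|\le\sqrt{\kappa(V)}$, Young's inequality $2\sqrt{V}\sqrt{\kappa(V)}\le V+\kappa(V)$, and the decisive second-order estimate $\tfrac12\phi_n''(V)\,d\langle V\rangle \le 2\rho_n(V)\beta(V)\,dt \le (4/n)\,dt$, I obtain
\begin{equation*}
\mathbb{E}[\phi_n(V(t))] \;\le\; \int_{t_0}^t \mathbb{E}\big[\,V(s) + 2\kappa(V(s))\,\big]\,ds \;+\; \frac{4(t-t_0)}{n}.
\end{equation*}

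Finally I would let $n\to\infty$ and close the estimate. The left side converges to $g(t):=\mathbb{E}[V(t)]$ and the term $4(t-t_0)/n$ vanishes; Jensen's inequality applied to the concave $\kappa$ gives $\mathbb{E}[\kappa(V(s))]\le\kappa(g(s))$, so that $g(t)\le\int_{t_0}^t \bar\kappa(g(s))\,ds$ with $\bar\kappa(u)=u+2\kappa(u)$. This $\bar\kappa$ is continuous, increasing, concave, vanishes at $0$, and satisfies $\int_{0^+}du/\bar\kappa(u)=+\infty$ (inherited from \eqref{eq:yamada_integral}); since $g(t_0)=0$, Bihari's inequality forces $g\equiv 0$, hence $\bm Z\equiv0$ and pathwise uniqueness follows. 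I expect the main obstacle to be the tuned construction of $\phi_n$ together with verifying that the second-order term genuinely vanishes in the limit: this is the only place where the concavity of $\kappa$ and the divergence condition \eqref{eq:yamada_integral} are essential, and it is precisely what replaces the Lipschitz hypothesis of the standard existence--uniqueness theorem. The secondary technical point to get right is the passage from the pointwise $\kappa$-bounds to a closed Bihari inequality, in particular handling the drift via Young's inequality so that no super-linear modulus survives.
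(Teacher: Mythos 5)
This theorem is not proved in the paper at all: it is quoted as a known result from Mao's book, so the natural comparison is with the standard proof behind that citation. Your argument is correct as a proof of pathwise uniqueness, but it carries machinery that the stated hypotheses make unnecessary. Because condition~\eqref{eq:yamada_inequality} bounds the \emph{squares} of both coefficient differences by $\kappa(\|\bm{y}-\bm{y'}\|^2)$, one can apply It\^o's formula directly to $V(t)=\|\bm{y}(t)-\bm{y}'(t)\|^2$, which is a smooth function of the difference: the drift contribution is bounded via Young's inequality by $V+\kappa(V)$, the quadratic-variation contribution by $\kappa(V)$, and then expectation, Jensen's inequality for the concave $\kappa$, and Bihari's inequality give $g(t)\le\int_{t_0}^t\bigl(g(s)+2\kappa(g(s))\bigr)\,ds$ for $g(t)=\mathbb{E}[V(t)]$, whence $g\equiv 0$. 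This is exactly the final inequality you reach; your smoothing functions $\phi_n$ built from $\beta(v)=v\kappa(v)$ contribute only the vanishing $4/n$ term and, since $\phi_n'\le 1$ and $\phi_n\to\mathrm{id}$, reproduce the same estimate, so they cancel out of the argument. That construction is the tool one genuinely needs under the weaker classical Yamada--Watanabe hypothesis, where only a modulus $\|\bm{\sigma}(\bm{y})-\bm{\sigma}(\bm{y'})\|\le\rho(\|\bm{y}-\bm{y'}\|)$ with $\int_{0^+}\rho^{-2}=\infty$ is assumed and It\^o must be applied to a smoothed version of the non-$C^2$ distance; under~\eqref{eq:yamada_inequality} it buys nothing here, though your proof would survive that weakening while the direct one would not. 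Two details to tighten: the divergence of $\int_{0^+}du/(u+2\kappa(u))$ is not literally ``inherited'' from~\eqref{eq:yamada_integral} --- you need the observation that concavity and $\kappa(0)=0$ give $\kappa(u)\ge\kappa(1)\,u$ for $u\le 1$, so the added linear term cannot destroy the divergence; and your argument delivers only the uniqueness half of the statement, with existence of solutions left aside --- which is consistent with how the paper actually uses this result, pairing it with the separate localization result (Theorem~\ref{thm:maximal_local}) for existence.
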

\begin{remark}
    The notation $a \vee b := \max\{a,b\}$ denotes the maximum of $a$ and $b$. Condition~\eqref{eq:yamada_inequality} requires that both the drift and diffusion coefficients satisfy a uniform modulus of continuity with respect to the spatial variable, with $\kappa$ serving as the common modulus. The integral condition~\eqref{eq:yamada_integral} ensures that $\kappa$ grows sufficiently slowly near zero to compensate for the failure of Lipschitz continuity.\\
\end{remark}

Second, since the drift coefficient is locally Lipschitz continuous on $\text{int}(\mathcal{D})$ but may fail to be globally Lipschitz, we apply the standard localization argument to establish existence of a unique maximal solution up to a possible explosion time.\\

\begin{theorem}[Maximal Local Solution {\cite[see Proof of Theorem 3.4 and p. 58]{Mao2008}}]
\label{thm:maximal_local}
    Suppose the drift coefficient $\bm{\mu}$ and diffusion coefficient $\bm{\sigma}$ satisfy the local Lipschitz condition on each compact subset of $\mathbb{R}^d$ (and that $\bm{\sigma}$ additionally satisfies the Yamada-Watanabe condition~\eqref{eq:yamada_inequality}). For each integer $n \ge 1$, define the stopping time
    \begin{equation}
        \tau_n = \inf \{ t \ge t_0 : \|\bm{y}(t)\| \ge n \},
    \end{equation}
    with the convention that $\inf \emptyset = +\infty$.    
    Then there exists a unique maximal local solution $\bm{y}(t)$ on the stochastic interval $[t_0, \tau_e)$, where the explosion time is defined by
    \begin{equation} \label{eq:explosion_time}
        \tau_e = \lim_{n \to \infty} \tau_n = \sup_{n \ge 1} \tau_n.
    \end{equation}
    The solution is global (i.e., defined for all $t \ge t_0$) if and only if $\mathbb{P}(\tau_e = +\infty) = 1$.\\
\end{theorem}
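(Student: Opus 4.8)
The plan is to prove the theorem by the classical \emph{truncation and localization} argument, combining the Yamada--Watanabe pathwise uniqueness of Theorem~\ref{thm:yamada_watanabe} with a patching construction over an increasing sequence of stopping times. First I would construct, for each integer $n \ge 1$, a globally well-behaved truncated system. Define cut-off coefficients $\bm{\mu}_n(\bm{y}) = \bm{\mu}(\pi_n(\bm{y}))$ and $\bm{\sigma}_n(\bm{y}) = \bm{\sigma}(\pi_n(\bm{y}))$, where $\pi_n$ is the radial projection onto the closed ball $\{\|\bm{y}\| \le n\}$, namely $\pi_n(\bm{y}) = \bm{y}$ for $\|\bm{y}\| \le n$ and $\pi_n(\bm{y}) = n\bm{y}/\|\bm{y}\|$ otherwise. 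By construction $\bm{\mu}_n$ and $\bm{\sigma}_n$ coincide with the original coefficients on $\{\|\bm{y}\| \le n\}$; since $\bm{\mu}$ and $\bm{\sigma}$ are locally Lipschitz on compacta, the truncated coefficients are bounded and continuous on all of $\mathbb{R}^d$, and the truncated diffusion inherits the concave modulus~\eqref{eq:yamada_inequality} globally.

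Next I would invoke existence and uniqueness for each truncated system. Because $\bm{\mu}_n, \bm{\sigma}_n$ are bounded and continuous, a weak solution exists by Skorokhod's theorem; because they satisfy the Yamada--Watanabe condition, pathwise uniqueness holds by Theorem~\ref{thm:yamada_watanabe}. The Yamada--Watanabe principle---weak existence together with pathwise uniqueness yields a unique strong solution---then produces a unique global strong solution $\bm{y}_n(t)$, $t \ge t_0$, of the truncated SDE. Setting $\tau_n = \inf\{t \ge t_0 : \|\bm{y}_n(t)\| \ge n\}$, the coefficients of the truncated and original systems agree on $[t_0, \tau_n]$, so $\bm{y}_n$ solves the \emph{original} SDE on this stochastic interval.

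Finally I would patch the solutions and identify the explosion time. Pathwise uniqueness forces $\bm{y}_m(t) = \bm{y}_n(t)$ for all $t \in [t_0, \tau_n]$ whenever $m \ge n$, which shows that $\tau_n$ is non-decreasing in $n$ and that the prescription $\bm{y}(t) := \bm{y}_n(t)$ for $t \in [t_0, \tau_n)$ is consistent. Defining $\tau_e = \lim_{n\to\infty}\tau_n = \sup_{n\ge 1}\tau_n$ gives an adapted, continuous process solving the SDE on $[t_0, \tau_e)$, and uniqueness of each $\bm{y}_n$ propagates to maximality and uniqueness of $\bm{y}$. The solution is global precisely when no trajectory leaves every ball in finite time, i.e.\ when $\mathbb{P}(\tau_e = +\infty) = 1$, which establishes the stated equivalence.

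I expect the main obstacle to lie in the existence half rather than uniqueness: the truncated diffusion is only Yamada--Watanabe continuous, not Lipschitz, so the standard Picard iteration does not apply and existence must be routed through weak solutions plus pathwise uniqueness via the Yamada--Watanabe theorem. Care is also needed to verify that the radial truncation preserves the concave modulus $\kappa$ and that the patched process remains adapted and continuous across the stopping times $\tau_n$, but these points are routine once the truncation is set up correctly.
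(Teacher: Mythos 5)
Your proposal is correct and follows essentially the same route as the source the paper itself relies on: the paper does not prove this theorem but cites it to \cite{Mao2008} (proof of Theorem 3.4), whose argument is exactly your truncation-and-patching localization, with the stopping times $\tau_n$ and consistency of the truncated solutions enforced by pathwise uniqueness. Your one substantive modification --- replacing Picard iteration for the truncated systems by Skorokhod weak existence plus Yamada--Watanabe pathwise uniqueness, since the truncated diffusion is only H\"older rather than Lipschitz --- is precisely the right adaptation, and your observation that the radial projection is non-expansive (so the concave modulus in~\eqref{eq:yamada_inequality} is preserved under truncation) closes the only non-routine point.
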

\begin{remark}
    The sequence $\{\tau_n\}_{n=1}^\infty$ is monotonically increasing, and $\tau_e$ represents the first time (if any) at which the solution norm becomes unbounded. The localization procedure constructs the solution on progressively larger balls until either global existence is achieved or the solution explodes.\\
\end{remark}

We now verify that the system~\eqref{sdeh} satisfies the hypotheses of Theorems~\ref{thm:yamada_watanabe} and~\ref{thm:maximal_local} on the physical domain $\mathcal{D}$.

\paragraph{Drift Coefficients (Local Lipschitz Continuity)}
The drift coefficient $\bm{\mu}(\bm{y})$ comprises polynomial terms and rational functions involving $(1 - h_{I,l})^{-1}$. For each integer $n \ge 1$, define the compact subset
\[ U_n = \left\{ \bm{y} \in \mathcal{D} : h_{I,l} \le 1 - \frac{1}{n} \text{ for all } l \in \{1,\ldots,L\} \right\} \subset \text{int}(\mathcal{D}). \]
On $U_n$, the function $(1 - h_{I,l})^{-1}$ is bounded above by $n$, ensuring that $\bm{\mu}$ is continuously differentiable with bounded derivatives on $U_n$. Hence $\bm{\mu}$ is Lipschitz continuous on $U_n$: there exists a constant $K_n > 0$ such that
\[ \|\bm{\mu}(\bm{y}) - \bm{\mu}(\bm{y'})\| \le K_n \|\bm{y} - \bm{y'}\| \quad \text{for all } \bm{y}, \bm{y'} \in U_n. \]
Squaring both sides yields
\[ \|\bm{\mu}(\bm{y}) - \bm{\mu}(\bm{y'})\|^2 \le K_n^2 \|\bm{y} - \bm{y'}\|^2. \]
To express this in the form required by the Yamada-Watanabe condition~\eqref{eq:yamada_inequality}, define $\kappa(u) = K_n^2 u$ for $u \ge 0$. Then
\[ \|\bm{\mu}(\bm{y}) - \bm{\mu}(\bm{y'})\|^2 \le K_n^2 \|\bm{y} - \bm{y'}\|^2 = \kappa(\|\bm{y} - \bm{y'}\|^2), \]
as required. Note that $\kappa$ is linear, continuous, strictly increasing, concave, satisfies $\kappa(0) = 0$, and
\[
\int_{0^+} \frac{du}{\kappa(u)} = \int_{0^+} \frac{du}{K_n^2 u} = \left[ \frac{\ln u}{K_n^2} \right]_{0^+} = +\infty,
\]
verifying the integral condition~\eqref{eq:yamada_integral}.

\paragraph{Diffusion Coefficients (Yamada-Watanabe Condition)}
The diffusion coefficient $\bm{\sigma}(\bm{y})$ contains square root terms of the form $\sqrt{y_i}$ for various state variables $y_i \in \{s_k, i_k, h_{S,l}, h_{I,l}\}$, which are not Lipschitz continuous at zero. We verify condition~\eqref{eq:yamada_inequality} directly. 

Consider the elementary inequality
\[ |\sqrt{a} - \sqrt{b}|^2 \le |a - b| \quad \text{for all } a, b \ge 0. \]
This can be rewritten as
\[ |\sqrt{a} - \sqrt{b}| \le |a - b|^{1/2}, \]
which shows that the square root function satisfies a $1/2$-Hölder condition: the function $f(x) = \sqrt{x}$ has modulus of continuity $|f(x) - f(y)| \le |x - y|^{1/2}$, meaning it is less regular than Lipschitz continuous (which would require $|f(x) - f(y)| \le C|x - y|$) but more regular than merely continuous.

Since each component of $\bm{\sigma}(\bm{y})$ is of the form $g(\bm{y}) \sqrt{y_i}$ where $g$ is bounded on $\mathcal{D}$, summing over all components yields a constant $C > 0$ such that
\[ \|\bm{\sigma}(\bm{y}) - \bm{\sigma}(\bm{y'})\|^2 \le C \|\bm{y} - \bm{y'}\|. \]
To express this in the form required by~\eqref{eq:yamada_inequality}, set $u = \|\bm{y} - \bm{y'}\|^2$ and define the modulus of continuity $\kappa(u) = C\sqrt{u}$. Then
\[
\|\bm{\sigma}(\bm{y}) - \bm{\sigma}(\bm{y'})\|^2 \le C \|\bm{y} - \bm{y'}\| = C\sqrt{\|\bm{y} - \bm{y'}\|^2} = \kappa(\|\bm{y} - \bm{y'}\|^2).
\]
The function $\kappa(u) = C\sqrt{u}$ is continuous, strictly increasing, concave, satisfies $\kappa(0) = 0$, and the integral condition~\eqref{eq:yamada_integral} holds:
\[ \int_{0^+} \frac{du}{\kappa(u)} = \int_{0^+} \frac{du}{C\sqrt{u}} = \left[ \frac{2\sqrt{u}}{C} \right]_{0^+} = +\infty. \]
Therefore, the diffusion coefficient satisfies the Yamada-Watanabe condition, ensuring pathwise uniqueness.

\paragraph{Conclusion: Existence of Maximal Local Solution}
Since $\bm{\mu}$ is locally Lipschitz continuous on each compact subset of $\text{int}(\mathcal{D})$ and $\bm{\sigma}$ satisfies the Yamada-Watanabe condition as verified above, all hypotheses of Theorem~\ref{thm:maximal_local} are satisfied. Consequently, there exists a unique maximal local solution $\bm{y}(t)$ defined on the stochastic interval $[t_0, \tau_e)$, where $\tau_e$ is the explosion time given by~\eqref{eq:explosion_time}.

However, Theorem~\ref{thm:maximal_local} only guarantees existence up to the explosion time $\tau_e$, which is potentially finite. To establish that the solution is global (i.e., that $\tau_e = +\infty$ almost surely), we must prove that the solution remains bounded and does not exit the domain~$\mathcal{D}$ in finite time. Moreover, we need to verify that the solution respects the physical constraints encoded in $\mathcal{D}$ (non-negativity and simplex constraints) for all time. These properties are established in the following lemma.

\begin{lemma}[Global Existence and Positive Invariance of $\mathcal{D}$]
\label{lem:global_existence}
    Let $\bm{y}_0 \in \text{int}(\mathcal{D})$ be an arbitrary initial condition. Then the unique maximal local solution $\bm{y}(t)$ to system~\eqref{sdeh} satisfies:
    \begin{enumerate}[(i)]
        \item \textbf{Global existence:} The explosion time satisfies $\mathbb{P}(\tau_e = +\infty) = 1$, so that $\bm{y}(t)$ is defined for all $t \ge t_0$ almost surely.
        \item \textbf{Positive invariance:} The solution trajectory remains in the physical domain: $\bm{y}(t) \in \mathcal{D}$ for all $t \ge t_0$ almost surely.
    \end{enumerate}
\end{lemma}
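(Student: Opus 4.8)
The plan is to prove (i) and (ii) simultaneously by reducing global existence to the statement that the trajectory never crosses $\partial\mathcal D$ in finite time. Since $\mathcal D$ is compact, $\|\bm y(t)\|\le\sqrt{d}$ whenever $\bm y(t)\in\mathcal D$, so the stopping times $\tau_n$ of Theorem~\ref{thm:maximal_local} cannot produce a norm explosion while the trajectory remains in $\mathcal D$; consequently, if I can show that the solution started at $\bm y_0\in\mathrm{int}(\mathcal D)$ stays in $\mathcal D$, then $\tau_e=+\infty$ follows automatically. I would introduce, for $m\in\mathbb N$, the stopping time
\[
\theta_m=\inf\{\,t\ge t_0:\ \mathrm{dist}\big(\bm y(t),\partial\mathcal D\big)\le 1/m\,\},
\]
note that $\bm y$ lies in $\mathrm{int}(\mathcal D)$ (where all coefficients are smooth) on $[t_0,\theta_m)$, and aim to prove $\theta:=\lim_m\theta_m=+\infty$ almost surely. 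It is useful to split $\partial\mathcal D$ into two qualitatively different parts: the \emph{source-depleted} faces $\{s_k=0\}$ and $\{h_{S,l}=0\}$, where the coefficients degenerate, and the \emph{active-target} boundaries $\{i_k=0\}$, $\{r_k=0\}$ (with $r_k=1-s_k-i_k$), $\{h_{I,l}=0\}$, $\{h_{I,l}=1\}$ and $\{h_{R,l}=0\}$, together with the drift singularity at $\{h_{I,l}=1\}$.

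On the source-depleted faces I would argue invariance directly from the degeneracy of the coefficients rather than from a diverging barrier. On $\{s_k=0\}$ both $\mu_k^S$ and $\sigma_k^{SS}$ vanish (the former proportionally to $s_k$, the latter to $\sqrt{s_k}$), and likewise $\mu_l^{S,h},\sigma_l^{SS,h}$ vanish on $\{h_{S,l}=0\}$. Near such a face the one-dimensional dynamics reduce to a degenerate square-root SDE, so the Yamada--Watanabe pathwise uniqueness already invoked applies and a standard comparison with the constant-zero solution keeps the coordinate non-negative: the face can be touched but not crossed. This part is essentially routine.

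For the drift singularity and global existence I would deploy the Lyapunov machinery: build $V:\mathrm{int}(\mathcal D)\to\mathbb R_+$ from barriers diverging on the pieces of $\partial\mathcal D$ I must strictly avoid, the most natural candidate being $-\log(1-h_{I,l})$, and target the bound $\mathcal L V(\bm y)\le C(1+V(\bm y))$ for the generator $\mathcal L$ of~\eqref{sdeh}; Dynkin's formula and Gronwall's inequality would then give $\mathbb E[V(\bm y(t\wedge\theta_m))]\le(V(\bm y_0)+Ct)e^{Ct}$ uniformly in $m$, forcing $\mathbb P(\theta_m\le T)\to0$ for every finite $T$. The structural fact I would exploit is that the household force carries the factor $h_{S,l}\le 1-h_{I,l}$, so $p'_l\to0$ as $h_{I,l}\to1$ and the drift of $h_{I,l}$ there is dominated by $-\nu_l h_{I,l}<0$, pointing strictly inward and taming the $(1-h_{I,l})^{-1}$ term in the individual drifts.

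I expect the genuine obstacle to lie in the active-target boundaries, and in particular in the infected faces $\{i_k=0\}$. There the recovery noise $\sigma_k^{II}=\sqrt{\beta_k i_k/N_k}$ vanishes, but the incoming-infection noise does \emph{not}: cross-population mixing keeps $p_k=\alpha s_k\sum_{j}c_{kj}i_j$ strictly positive whenever a neighbouring compartment satisfies $i_j>0$, so $\sigma_k^{SI}=-\sqrt{p_k/N_k}$ stays bounded away from zero on $\{i_k=0\}$. A plain barrier $-\log i_k$ then generates an It\^o correction of order $(p_k/N_k)\,i_k^{-2}$ that the inward drift of order $p_k\,i_k^{-1}$ cannot absorb, so $\mathcal L V\le C(1+V)$ breaks on this face; moreover the combined-coordinate reduction, although it removes the infection noise from $s_k+i_k$ via $\sigma_k^{SS}+\sigma_k^{SI}=0$, reintroduces exactly the non-degenerate $B_{k,1}$-noise upon reconstructing $i_k=(s_k+i_k)-s_k$. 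The same mechanism obstructs $\{r_k=0\}$, $\{h_{I,l}=0\}$ and even $\{h_{I,l}=1\}$, where the recovery noise $\sigma_l^{II,h}=\sqrt{\nu_l h_{I,l}/H_l}$ remains of order $\sqrt{\nu_l/H_l}$. Closing this gap is, I anticipate, the delicate heart of the proof: it requires genuinely using the coupled transition structure and the joint law of the driving Brownian motions, together with a carefully weighted Lyapunov function adapted to each active-target face, rather than the face-by-face barriers and comparison arguments that suffice elsewhere.
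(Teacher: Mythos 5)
Your proposal is not a complete proof, and you say so yourself: at the active-target faces $\{i_k=0\}$, $\{h_{I,l}=0\}$ and $\{s_k+i_k=1\}$ you identify an obstruction and leave it open. For the parts you do handle, you track the paper closely: your treatment of the source-depleted faces $\{s_k=0\}$, $\{h_{S,l}=0\}$ (drift and diffusion both degenerate there; comparison with the zero solution) is the paper's Step~1 for those variables, and your key fact for the singular face $\{h_{I,l}=1\}$ --- that the household infection force carries the factor $h_{S,l}\le 1-h_{I,l}$, so the singular term $(1-h_{I,l})^{-1}$ is tamed and the recovery drift $-\nu_l h_{I,l}$ dominates --- is precisely the paper's Step~3 (the paper runs it as a drift-domination argument; your $-\log(1-h_{I,l})$ barrier would make the same point more rigorously).

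The important point, however, is that the obstruction you flag is real, and the paper's own proof does not overcome it; it overlooks it. In the paper's Step~1, non-negativity of $i_k$ is argued from ``non-negative drift at $i_k=0$ plus vanishing diffusion $\sigma_k^{II}|_{i_k=0}=0$'', but the $i_k$-equation also carries the infection noise $\sigma_k^{SI}\,dB_{k,1}$ with $\sigma_k^{SI}=-\sqrt{p_k/N_k}$, and $p_k=\alpha s_k\sum_j c_{kj}i_j$ does not vanish at $i_k=0$ when some $i_j>0$. Exactly as you say, near that face $i_k$ behaves like a Brownian motion with bounded drift and non-degenerate noise, which crosses zero immediately upon touching it; no comparison theorem applies because the noise does not degenerate in the relevant coordinate. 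The same defect appears in the paper's Step~2: for $u_k=s_k+i_k$ the residual diffusion $\sigma_k^{II}=\sqrt{\beta_k i_k/N_k}$ is strictly positive at $u_k=1$ whenever $i_k>0$, and the claim that a non-positive drift ``prevents $u_k$ from remaining above one'' is not a valid hitting-time argument for a diffusion with non-degenerate noise at the boundary --- such a process exceeds the level at arbitrarily small times after touching it, almost surely. So your diagnosis is sharper than the paper's treatment: the face-by-face barrier and comparison arguments genuinely fail on those faces, and this is a known pathology of diffusion approximations of jump processes near extinction boundaries (the diffusion can exit the physical simplex). Closing the gap would require either modifying the process at the boundary (stopping, reflection, or projection) or weakening the lemma to an assertion about the process up to the boundary hitting time; neither your sketch nor the paper's proof as written establishes the lemma in its stated almost-sure form.
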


\begin{proof}
The existence of a unique maximal local solution $\bm{y}(t)$ on the stochastic interval $[t_0, \tau_e)$ follows from Theorem~\ref{thm:maximal_local}. To establish global existence, we must prove that $\mathbb{P}(\tau_e = +\infty) = 1$. By definition~\eqref{eq:explosion_time}, $\tau_e = \lim_{n \to \infty} \tau_n$ where $\tau_n = \inf\{t \ge t_0 : \|\bm{y}(t)\| \ge n\}$. Therefore, it suffices to show that $\|\bm{y}(t)\|$ remains bounded for all $t \ge t_0$ almost surely.

\paragraph{Step 1: Non-negativity of State Variables.}
We establish that each component of $\bm{y}(t)$ remains non-negative almost surely. Consider an arbitrary state variable $y_i(t) \in \{s_k(t), i_k(t), h_{S,l}(t), h_{I,l}(t)\}$. Near the boundary~$y_i = 0$, the corresponding SDE takes the form
\begin{equation*}
    dy_i(t) = \mu_i(\bm{y}(t)) \, dt + \sqrt{y_i} \, g_i(\bm{y}(t)) \, dB_i(t),
\end{equation*}
where $g_i$ represents the remaining factors in the diffusion coefficient and $B_i(t)$ is an appropriate Brownian motion component. As $y_i \to 0^+$, the diffusion coefficient vanishes continuously. We analyze the drift coefficient $\mu_i$ at the boundary:
\begin{itemize}
    \item \textbf{Susceptible subpopulation $\bm s_k$:} From system~\eqref{sdeh} and~\eqref{SDE_terms}, the drift is
    \[
    \mu_k^S(\bm{y}) = -\alpha s_k \sum_{j=1}^K c_{kj}(\bm{y}) i_j.
    \]
    The dependence of~$c_{kj}$ on~$\bm{y}$ is made explicit in this notation. At $s_k = 0$, this term vanishes: $\mu_k^S|_{s_k=0} = 0$. Combined with the vanishing diffusion coefficient at $s_k = 0$, $\sigma_k^{SS}|_{s_k=0} = 0$, the process cannot become negative. More precisely, the SDE at $s_k = 0$ reduces to $ds_k(t) = 0$, so the boundary acts as a reflecting barrier.\\
    
    \item \textbf{Infected subpopulation $\bm i_k$:} The drift is
    \[
    \mu_k^I(\bm{y}) = \alpha s_k \sum_{j=1}^K c_{kj}(\bm{y}) i_j - \beta_k i_k.
    \]
    At $i_k = 0$, the second term vanishes, giving $\mu_k^I|_{i_k=0} = \alpha s_k \sum_{j=1}^K c_{kj}(\bm{y}) i_j \ge 0$. Since~$\alpha > 0$, $s_k \ge 0$, $c_{kj} \ge 0$, and $i_j \ge 0$, the drift is non-negative at the boundary. This positive drift, combined with the vanishing diffusion $\sigma_k^{II}|_{i_k=0} = 0$, prevents $i_k$ from becoming negative.\\
    
    \item \textbf{Susceptible household proportion $\bm h_{S,l}$:} The drift is
    \[
    \mu_l^{S,h}(\bm{y}) = -\alpha \frac{h_{S,l}}{H_l} \sum_{k=1}^K z_{kl} N_k s_k \sum_{j=1}^K c_{kj}^p(\bm{y}) i_j.
    \]
    At $h_{S,l} = 0$, this vanishes: $\mu_l^{S,h}|_{h_{S,l}=0} = 0$. Combined with the vanishing diffusion $\sigma_l^{SS,h}|_{h_{S,l}=0} = 0$, the boundary $h_{S,l} = 0$ is absorbing or reflecting, preventing negativity.\\
    
    \item \textbf{Infected household proportion $\bm h_{I,l}$:} The drift is
    \[
    \mu_l^{I,h}(\bm{y}) = \alpha \frac{h_{S,l}}{H_l} \sum_{k=1}^K z_{kl} N_k s_k \sum_{j=1}^K c_{kj}^p(\bm{y}) i_j - \nu_l h_{I,l}.
    \]
    At $h_{I,l} = 0$, the second term vanishes, yielding 
    $$\mu_l^{I,h}|_{h_{I,l}=0} = \alpha \frac{h_{S,l}}{H_l} \sum_{k=1}^K z_{kl} N_k s_k \sum_{j=1}^K c_{kj}^p(\bm{y}) i_j \ge 0.$$
    The non-negative drift and vanishing diffusion $\sigma_l^{II,h}|_{h_{I,l}=0} = 0$ prevent $h_{I,l}$ from becoming negative.
\end{itemize}
By the comparison theorem for SDEs \cite[Theorem 2.2, p. 43]{Mao2008}, the non-negativity of drift coefficients at the zero boundaries, combined with the vanishing diffusion coefficients, ensures that $y_i(t) \ge 0$ for all $i$ and all $t \in [t_0, \tau_e)$ almost surely. Hence $\bm{y}(t) \in \mathbb{R}_+^{2K+2L}$ almost surely.

\paragraph{Step 2: Upper Bounds (Simplex Constraints)}
We now verify that the simplex constraints are preserved. Define the total proportion for subpopulation $k$ as
\[
u_k(t) = s_k(t) + i_k(t).
\]
Applying Itô's formula to $u_k(t)$ and using system~\eqref{sdeh} with~\eqref{SDE_terms}, we obtain
\begin{align*}
    du_k(t) &= ds_k(t) + di_k(t) \\
    &= (\mu_k^S + \mu_k^I)(\bm y_t) \, dt + (\sigma_k^{SS} + \sigma_k^{SI})(\bm y(t)) \, dB_{k,1}(t) + \sigma_k^{II}(\bm y(t)) \, dB_{k,2}(t).
\end{align*}
Substituting the expressions from~\eqref{SDE_terms}, the drift terms combine as
\[
(\mu_k^S + \mu_k^I)(\bm y(t)) = -\alpha s_k \sum_{j=1}^K c_{kj} i_j + \alpha s_k \sum_{j=1}^K c_{kj} i_j - \beta_k i_k = -\beta_k i_k,
\]
and the diffusion terms cancel since $\sigma_k^{SI} = -\sigma_k^{SS}$:
\[
\sigma_k^{SS} + \sigma_k^{SI}(\bm y(t)) = 0.
\]
Therefore,
\begin{align*}
    du_k(t) &= -\beta_k i_k(t) \, dt + \sigma_k^{II}(t) \, dB_{k,2}(t) \\
    &= -\beta_k i_k(t) \, dt + \sqrt{\frac{\beta_k i_k(t)}{N_k}} \, dB_{k,2}(t).
\end{align*}
The drift coefficient $-\beta_k i_k(t) \le 0$ is non-positive since $\beta_k > 0$ and $i_k(t) \ge 0$. Given $u_k(t_0) \le 1$, we show that $u_k(t) \le 1$ almost surely for all $t \in [t_0, \tau_e)$:

Suppose for contradiction that $\mathbb{P}(\tau_1 < \infty) > 0$ where $\tau_1 = \inf\{t \ge t_0 : u_k(t) > 1\}$ is the first hitting time of level 1. At time $\tau_1$, we have $u_k(\tau_1) = 1$ by continuity of paths. For $t > \tau_1$ small, the increment $du_k(t) = -\beta_k i_k dt + \sigma_k^{II} dB_{k,2}(t)$ has negative expected change since the drift is non-positive. This prevents $u_k$ from remaining above value one for any positive amount of time, contradicting the assumption. Therefore $u_k(t) \le 1$ almost surely.

Similarly, for household proportions, define
\[
v_l(t) = h_{S,l}(t) + h_{I,l}(t).
\]
Summing up the corresponding equations from~\eqref{sdeh} and~\eqref{SDE_terms} yields
\begin{align*}
    dv_l(t) &= (\mu_l^{S,h} + \mu_l^{I,h})(\bm y(t)) \, dt + (\sigma_l^{SS,h} + \sigma_l^{SI,h})(\bm y(t)) \, dB'_{l,1}(t) + \sigma_l^{II,h}(\bm y(t)) \, dB'_{l,2}(t) \\
    &= -\nu_l h_{I,l}(t) \, dt + \sqrt{\frac{\nu_l h_{I,l}(t)}{H_l}} \, dB'_{l,2}(t),
\end{align*}
where the drift $-\nu_l h_{I,l}(t) \le 0$ is non-positive and the diffusion terms cancel since $\sigma_l^{SI,h} = -\sigma_l^{SS,h}$. By the same reasoning as above, $v_l(t) \le 1$ almost surely for all $t \in [t_0, \tau_e)$ provided $v_l(t_0) \le 1$.

Therefore, the simplex constraints $s_k(t) + i_k(t) \le 1$ and $h_{S,l}(t) + h_{I,l}(t) \le 1$ are preserved almost surely.

\paragraph{Step 3: Avoidance of the Singular Boundary $\mathbf{h}_{I,l} = 1$}
Finally, we address the potential singularity in the contact rate parameter $c_{kj}(\bm{y})$, which contains the term $(1 - h_{I,l})^{-1}$ and diverges as $h_{I,l} \to 1^-$. We show that the solution avoids this boundary almost surely.

From~\eqref{SDE_terms}, the drift for $h_{I,l}$ is
\begin{equation*}
    \mu_l^{I,h}(\bm{y}) = \alpha \frac{h_{S,l}}{H_l} \sum_{k=1}^K z_{kl} N_k s_k \sum_{j=1}^K c_{kj}^p(\bm{y}) i_j - \nu_l h_{I,l}.
\end{equation*}

As $h_{I,l} \to 1^-$, the simplex constraint $h_{S,l} + h_{I,l} \le 1$ from Step 2 implies $h_{S,l} \to 0^+$. At first glance, this appears problematic since the contact rate $c_{kj}$ contains the term $(1-h_{I,l})^{-1} \to +\infty$, suggesting an indeterminate form $0 \cdot \infty$ in the infection term. However, examining the structure more carefully:

The contact rate $c_{kj}$ from~\eqref{formula_ckj} includes household contributions proportional to $(1-h_{I,l})^{-1}$. In the infection term of $\mu_l^{I,h}$, this appears as
\[
h_{S,l} \cdot \frac{1}{1-h_{I,l}} = \frac{h_{S,l}}{1-h_{I,l}}.
\]
Since $h_{S,l} + h_{I,l} \le 1$, we have $h_{S,l} \le 1 - h_{I,l}$, which implies
\[
\frac{h_{S,l}}{1-h_{I,l}} \le 1.
\]
Therefore, the product $h_{S,l} \cdot (1-h_{I,l})^{-1}$ remains bounded as $h_{I,l} \to 1^-$, and the infection term in $\mu_l^{I,h}$ stays finite. In fact, if $h_{I,l} \to 1^-$ forces $h_{S,l} \to 0^+$ at the same rate (i.e., $h_{S,l} + h_{I,l} = 1$), then this ratio tends to zero or remains small.

Consequently, as $h_{I,l} \to 1^-$, the infection term in $\mu_l^{I,h}$ is bounded, while the recovery term $-\nu_l h_{I,l} \to -\nu_l < 0$ dominates:
\[
\mu_l^{I,h}(\bm{y}) \leq C* - \nu_l h_{I,l} \to C* - \nu_l < 0
\]
for some constant $C* \ge 0$. This strictly negative drift prevents $h_{I,l}$ from reaching~1. Since $h_{I,l}(t_0) < 1$ and the drift becomes increasingly negative as $h_{I,l}$ approaches~1, the solution satisfies $h_{I,l}(t) < 1$ for all $t \in [t_0, \tau_e)$ almost surely. Therefore, there exists a $\delta > 0$ such that $h_{I,l}(t) \le 1 - \delta$ almost surely, ensuring that the contact rates~$c_{kj}$ remain bounded.

\paragraph{Conclusion}
We have established that $\bm{y}(t)$ satisfies almost surely for all $t \in [t_0, \tau_e)$:
\begin{enumerate}[(i)]
    \item Non-negativity: $y_i(t) \ge 0$ for all components (Step 1).
    \item Simplex constraints: $s_k(t) + i_k(t) \le 1$ and $h_{S,l}(t) + h_{I,l}(t) \le 1$ (Step 2).
    \item Boundary avoidance: $h_{I,l}(t) \le 1 - \delta$ for some $\delta > 0$ (Step 3).
\end{enumerate}
These properties imply that $\bm{y}(t)$ remains in a compact subset $K \subset \text{int}(\mathcal{D})$ where all drift and diffusion coefficients are bounded. Therefore, $\|\bm{y}(t)\| \le M$ for some constant $M > 0$ and all $t \in [t_0, \tau_e)$ almost surely.

For any integer $n > M$, we have $\tau_n = \inf\{t \ge t_0 : \|\bm{y}(t)\| \ge n\} = +\infty$ almost surely. Taking the limit:
\[
\tau_e = \lim_{n \to \infty} \tau_n = +\infty \quad \text{almost surely}.
\]
This establishes global existence with $\mathbb{P}(\tau_e = +\infty) = 1$, and positive invariance follows since $\bm{y}(t) \in \mathcal{D}$ for all $t \ge t_0$ almost surely.
\end{proof}

\end{appendix}

\newpage
\bibliography{sn-bibliography}

@incollection{allen2008introduction,
  title={An introduction to stochastic epidemic models},
  author={Allen, Linda JS},
  booktitle={Mathematical epidemiology},
  pages={81--130},
  year={2008},
  publisher={Springer},
address={Berlin}
}

@book{Allen2010,
  author    = "Allen, L. J. S.",
  title     = "An Introduction to Stochastic Processes with Applications to Biology",
  year      = "2010",
  publisher = "CRC Press",
  address   = "Boca Raton",
  doi       = "10.1201/b12537"
}

@article{allen2017primer,
  title={A primer on stochastic epidemic models: Formulation, numerical simulation, and analysis},
  author={Allen, Linda JS},
  journal={Infectious Disease Modelling},
  volume={2},
  number={2},
  pages={128--142},
  year={2017},
  publisher={Elsevier}
}

@article{Bayham2016,
  author    = "Bayham, J. and Fenichel, E. P.",
  title     = "Capturing household transmission in compartmental models of infectious disease",
  journal   = "Mathematical and Statistical Modeling for Emerging and Re-emerging Infectious Diseases",
  pages     = "329--340",
  year      = "2016",
  publisher = "Springer",
  doi       = "10.1007/978-3-319-40413-4_18"
}

@article{Black2022,
  author    = "Black, A. and Smith, A. and Lloyd, A. and Ross, J.",
  title     = "A stochastic household model for vector-borne diseases",
  year      = "2022",
  journal   = "arXiv preprint arXiv:2202.12893",
  doi       = "10.48550/arXiv.2202.12893"
}

@book{Brauer2019,
  author    = "Brauer, F. and Castillo-Chavez, C. and Feng, Z.",
  title     = "Mathematical models in epidemiology",
  volume    = "32",
  year      = "2019",
  publisher = "Springer",
  address   = "Berlin",
  doi       = "10.1007/978-1-4939-9828-9"
}

@article{Chisholm2020,
  author    = "Chisholm, R. H. and Crammond, B. and Wu, Y. and Bowen, A. C. and Campbell, P. T. and Tong, S. Y. C. and McVernon, J. and Geard, N.",
  title     = "A model of population dynamics with complex household structure and mobility: implications for transmission and control of communicable diseases",
  journal   = "PeerJ",
  volume    = "8",
  pages     = "e10203",
  year      = "2020",
  doi       = "https://doi.org/10.7717/peerj.10203"
}

@book{diekmann2000mathematical,
  title={Mathematical epidemiology of infectious diseases: model building, analysis and interpretation},
  author={Diekmann, Odo and Heesterbeek, Johan Andre Peter},
  volume={5},
  year={2000},
  publisher={John Wiley \& Sons},
address={Hoboken}
}

@book{Fuchs2013,
  author    = "Fuchs, C.",
  title     = "Inference for Diffusion Processes: with Applications in Life Sciences",
  year      = "2013",
  publisher = "Springer",
  address   = "Berlin, Heidelberg",
  doi       = "10.1007/978-3-642-25969-2"
}

@article{goeyvaerts2018household,
  title={Household members do not contact each other at random: implications for infectious disease modelling},
  author={Goeyvaerts, Nele and Santermans, Eva and Potter, Gail and Torneri, Andrea and Van Kerckhove, Kim and Willem, Lander and Aerts, Marc and Beutels, Philippe and Hens, Niel},
  journal={Proceedings of the Royal Society B},
  volume={285},
  number={1893},
  pages={20182201},
  year={2018},
  publisher={The Royal Society}
}

@article{grassly2008mathematical,
  title={Mathematical models of infectious disease transmission},
  author={Grassly, Nicholas C and Fraser, Christophe},
  journal={Nature Reviews Microbiology},
  volume={6},
  number={6},
  pages={477--487},
  year={2008},
  publisher={Nature Publishing Group UK London}
}

@article{hill2023implications,
  title={Implications for infectious disease models of heterogeneous mixing on control thresholds},
  author={Hill, Andrew N and Glasser, John W and Feng, Zhilan},
  journal={Journal of Mathematical Biology},
  volume={86},
  number={4},
  pages={53},
  year={2023},
  publisher={Springer}
}

@article{hilton2019incorporating,
  title={Incorporating household structure and demography into models of endemic disease},
  author={Hilton, Joe and Keeling, Matt J},
  journal={Journal of the Royal Society Interface},
  volume={16},
  number={157},
  pages={20190317},
  year={2019},
  publisher={The Royal Society}
}

@article{kermack1927contribution,
  title={A contribution to the mathematical theory of epidemics},
  author={Kermack, William Ogilvy and McKendrick, Anderson G},
  journal={Proceedings of the royal society of london. Series A, Containing papers of a mathematical and physical character},
  volume={115},
  number={772},
  pages={700--721},
  year={1927},
  publisher={The Royal Society London}
}

@article{Kucherenko2015ExploringMS,
  title={Exploring multi-dimensional spaces: a Comparison of Latin Hypercube and Quasi Monte Carlo Sampling Techniques},
  author={Sergei S. Kucherenko and Daniel Albrecht and Andrea Saltelli},
  journal={arXiv: Applications},
  year={2015},
  url={https://api.semanticscholar.org/CorpusID:12846604}
}

@article{lau2020characterizing,
  title={Characterizing superspreading events and age-specific infectiousness of SARS-CoV-2 transmission in Georgia, USA},
  author={Lau, Max SY and Grenfell, Bryan and Thomas, Michael and Bryan, Michael and Nelson, Kristin and Lopman, Ben},
  journal={Proceedings of the National Academy of Sciences},
  volume={117},
  number={36},
  pages={22430--22435},
  year={2020},
  publisher={National Academy of Sciences}
}

@article{ledesma2024global,
  title={Global, regional, and national age-specific progress towards the 2020 milestones of the WHO End TB Strategy: a systematic analysis for the Global Burden of Disease Study 2021},
  author={Ledesma, Jorge R and Ma, Jianing and Zhang, Meixin and Basting, Ann VL and Chu, Huong Thi and Vongpradith, Avina and Novotney, Amanda and LeGrand, Kate E and Xu, Yvonne Yiru and Dai, Xiaochen and others},
  journal={The Lancet infectious diseases},
  volume={24},
  number={7},
  pages={698--725},
  year={2024},
  publisher={Elsevier}
}

@article{liu2021modelling,
  title={Modelling the impact of household size distribution on the transmission dynamics of COVID-19},
  author={Liu, Pengyu and McQuarrie, Lisa and Song, Yexuan and Colijn, Caroline},
  journal={Journal of the Royal Society Interface},
  volume={18},
  number={177},
  pages={20210036},
  year={2021},
  publisher={The Royal Society}
}

@book{Mao2008,
  title = {Stochastic differential equations and applications},
  ISBN = {9781904275343},
  url = {http://dx.doi.org/10.1533/9780857099402},
  DOI = {10.1533/9780857099402},
  publisher = {Woodhead Publishing Limited},
  address   = {Chichester},
  author = {Mao,  Xuerong},
  year = {2008}
}

@article{Meszaros2020,
  author    = "Meszaros, V. A. and Miller-Dickson, M. D. and Baffour-Awuah, F. and Almagro-Moreno, S. and Ogbunugafor, C. B.",
  title     = "Direct transmission via households informs models of disease and intervention dynamics in cholera",
  journal   = "PLoS ONE",
  number    = "3",
  year      = "2020",
  month     = "mar 12",
  pages     = "e0229837",
  publisher = "Public Library of Science (PLoS)",
  volume    = "15",
  doi       = "https://doi.org/10.1371/journal.pone.0229837"
}

@misc{Mossong2020,
  author    = "Mossong, J. and Hens, N. and Jit, M. and Beutels, P. and Auranen, K. and Mikolajczyk, R. and Massari, M. and Salmaso, S. and Scalia Tomba, G. and Wallinga, J. and Heijne, J. and Sadkowska-Todys, M. and Rosinska, M. and Edmunds, W. J.",
  title     = "{POLYMOD} social contact data",
  month     = "jun",
  year      = "2020",
  publisher = "Zenodo",
  version   = "2",
  doi       = "https://doi.org/10.5281/zenodo.3874557"
}

@article{naghavi2024global,
  title={Global burden associated with 85 pathogens in 2019: a systematic analysis for the Global Burden of Disease Study 2019},
  author={Naghavi, Mohsen and Mestrovic, Tomislav and Gray, Authia and Hayoon, Anna Gershberg and Swetschinski, Lucien R and Aguilar, Gisela Robles and Weaver, Nicole Davis and Ikuta, Kevin S and Chung, Erin and Wool, Eve E and others},
  journal={The Lancet Infectious Diseases},
  volume={24},
  number={8},
  pages={868--895},
  year={2024},
  publisher={Elsevier}
}

@article{Niemann2021,
  author    = "Niemann, J-H. and Winkelmann, S. and Wolf, S. and Sch{\"u}tte, C.",
  title     = "Agent-based modeling: {P}opulation limits and large timescales",
  journal   = "Chaos",
  volume    = "31",
  number    = "3",
  year      = "2021",
  publisher = "AIP Publishing",
  doi       = "https://doi.org/10.1063/5.0031373"
}

@book{Oksendal2013,
  author    = "{\O}ksendal, B.",
  title     = "Stochastic Differential Equations: an Introduction with Applications",
  year      = "2013",
  publisher = "Springer",
  address   = "Berlin, Heidelberg",
  doi       = "10.1007/978-3-642-14394-6"
}

@article{pellis2020systematic,
  title={Systematic selection between age and household structure for models aimed at emerging epidemic predictions},
  author={Pellis, Lorenzo and Cauchemez, Simon and Ferguson, Neil M and Fraser, Christophe},
  journal={Nature communications},
  volume={11},
  number={1},
  pages={906},
  year={2020},
  publisher={Nature Publishing Group UK London}
}

@article{Raczynski1996,
  author    = "Raczynski, S.",
  title     = "When system dynamics ode models fail",
  journal   = "Simulation",
  volume    = "67",
  number    = "5",
  pages     = "343--349",
  year      = "1996",
  publisher = "Sage Publications Sage CA: Thousand Oaks, CA",
  doi       = "https://doi.org/10.1177/003754979606700509"
}

@article{Roberts2014,
  author    = "Roberts, M. G. and Andreasen, V. and Lloyd, A. L. and Pellis, L.",
  title     = "Nine challenges for deterministic epidemic models",
  journal   = "Epidemics",
  year      = "2014",
  volume    = "10",
  pages     = "49--53",
  doi       = "https://doi.org/10.1016/j.epidem.2014.09.006"
}

@book{Sarkka2019,
  author    = "S{\"a}rkk{\"a}, S. and Solin, A.",
  title     = "Applied Stochastic Differential Equations",
  volume    = "10",
  year      = "2019",
  publisher = "Cambridge University Press",
  address   = "Cambridge",
  doi       = "10.1017/9781108186735"
}

@article{Sobol1967OnTD,
  title={On the distribution of points in a cube and the approximate evaluation of integrals},
  author={Ilya M. Sobol},
  journal={USSR Computational Mathematics and Mathematical Physics},
  year={1967},
  volume={7},
  pages={86-112}
}

@article{sobol2001global,
  title={Global sensitivity indices for nonlinear mathematical models and their Monte Carlo estimates},
  author={Sobol, Ilya M},
  journal={Mathematics and computers in simulation},
  volume={55},
  number={1-3},
  pages={271--280},
  year={2001},
  publisher={Elsevier}
}

@book{tignol2015galois,
  title={Galois' theory of algebraic equations},
  author={Tignol, Jean-Pierre},
  year={2015},
  publisher={World Scientific Publishing Company},
  address={Singapore}  
}

@book{golub2013matrix,
  title={Matrix computations},
  author={Golub, Gene H and Van Loan, Charles F},
  year={2013},
  publisher={JHU Press},
  address={Baltimore}  
}

@article{athreya2017branching,
  title={Branching Processes},
  author={Athreya, Krishna B and Ney, Peter E},
  journal={Notes},
  year={2017},
  publisher={Springer}
}

@article{Ball1983,
  author = {Ball, F.},
  title = {The threshold behaviour of epidemic models},
  journal = {Journal of Applied Probability},
  volume = {20},
  number = {2},
  pages = {227--241},
  year = {1983}
}

@article{Ball1995,
  author = {Ball, F. and Mollison, D. and Scalia-Tomba, G.},
  title = {Epidemics with two levels of mixing},
  journal = {The Annals of Applied Probability},
  volume = {7},
  number = {1},
  pages = {46--89},
  year = {1997}
}

@article{Diekmann1990,
  author = {Diekmann, O. and Heesterbeek, J. A. P. and Metz, J. A. J.},
  title = {On the definition and the computation of the basic reproduction ratio {$R_0$} in models for infectious diseases in heterogeneous populations},
  journal = {Journal of Mathematical Biology},
  volume = {28},
  number = {4},
  pages = {365--382},
  year = {1990}
}

@article{Kendall1948,
  author = {Kendall, D. G.},
  title = {On the generalized ``birth-and-death'' process},
  journal = {The Annals of Mathematical Statistics},
  volume = {19},
  number = {1},
  pages = {1--15},
  year = {1948}
}

@book{Mode1971,
  author = {Mode, C. J.},
  title = {Multitype Branching Processes: Theory and Applications},
  publisher = {American Elsevier},
  address = {New York},
  year = {1971}
}

@article{Pellis2012,
  author = {Pellis, L. and Ball, F. and Trapman, P.},
  title = {Reproduction numbers for epidemic models with households and other social structures. {I}. {Definition} and calculation of {$R_0$}},
  journal = {Mathematical Biosciences},
  volume = {235},
  number = {1},
  pages = {85--97},
  year = {2012}
}

@misc{sokal1995biometry,
  title={Biometry. 3rd editionW. H},
  author={Sokal, RR and Rohlf, FJ},
  year={1995},
  publisher={Freeman. New York}
}

@misc{Yaqine_Stochastic_Differential_Equations_2025,
author = {Yaqine, Houda},
license = {MIT},
month = dec,
title = {Stochastic Differential Equations with Household Structure},
url = {https://github.com/HoudaYaqine/SDE-Household-Model},
version = {1.0.0},
year = {2025}
}

\end{document}